\DeclarePairedDelimiter\ceil{\lceil}{\rceil}
\DeclarePairedDelimiter\floor{\lfloor}{\rfloor}
\newtheorem{thm}{Theorem}
\newtheorem{lem}{Lemma}
\newtheorem{example}{Example}
\newtheorem{corollary}{Corollary}
\newtheorem{rem}{Remark}
\def\BibTeX{{\rm B\kern-.05em{\sc i\kern-.025em b}\kern-.08em
    T\kern-.1667em\lower.7ex\hbox{E}\kern-.125emX}}
\begin{document}

\title{Coded Caching with Shared Caches and \\ Private Caches }

\author{\IEEEauthorblockN{ Elizabath Peter, K. K. Krishnan Namboodiri, and B. Sundar Rajan\\}
	\IEEEauthorblockA{Department of Electrical Communication Engineering, IISc
		Bangalore, India \\
		E-mail: \{elizabathp, krishnank, bsrajan\}@iisc.ac.in}
	\thanks{A part of the content of this manuscript appeared in \textit{Proc. IEEE Inf. Theory Workshop (ITW)}, 2023 \cite{PNR3}.}
}
\maketitle
\begin{abstract}
This work studies the coded caching problem in a setting where users can access a private cache of their own along with a shared cache. The setting consists of a server connected to a set of users, assisted by a smaller number of helper nodes that are equipped with their own storage. In addition to the helper caches, each user possesses a dedicated cache which is also used to prefetch file contents. Each helper cache can serve an arbitrary number of users, but each user gets served by only one helper cache. We consider two scenarios: (a) the server has no prior information about the user-to-helper cache association, and (b) the server knows the user-to-helper cache association at the placement phase itself. We design centralized coded caching schemes under uncoded placement for the above two settings. For case (b), we propose four schemes, and establish the optimality of certain schemes in specific memory regimes by deriving matching lower bounds. The fourth scheme, called as the composite scheme, appropriately partitions the file library and the cache memories between two other schemes in case (b) to minimize the rate by leveraging the advantages of both. 
\end{abstract}

\begin{IEEEkeywords}
 Coded caching, helper caches, private caches, rate-memory tradeoff.
\end{IEEEkeywords}

\section{Introduction}
The proliferation of immersive applications has led to an exponential growth in multimedia traffic which in turn has brought the usage of caches distributed in the network to the fore. Coded caching, introduced by Maddah-Ali and Niesen \cite{MaN}, is a technique to combat the traffic congestion experienced during peak times by exploiting the caches in the network. Coded caching consists of two phases: \textit{placement (prefetching)} phase and \textit{delivery} phase. During the placement phase, the network is not congested, and the caches are populated with portions of file contents in coded or uncoded form, satisfying the memory constraint. The subsequent delivery phase commences with the users requesting files from the server. The server exploits the prefetched cache contents to create coded multicasting opportunities so that the users' requests are satisfied with a minimum number of bits sent over the shared link. By allowing coding in the delivery phase, a global caching gain proportional to the total memory size available in the network is achieved in addition to the local caching gain. The global caching gain is a multicasting gain available simultaneously for all possible demands. The network model considered in \cite{MaN} is that of a dedicated cache network with a server having access to a library of $N$ equal-length files connected to $K$ users through an error-free shared link. Each user possesses a cache of size equal to $M$ files. The coded caching scheme in \cite{MaN}, which is referred to as MaN scheme henceforth, was shown to be optimal when $N \geq K$ under the constraint of uncoded cache placement \cite{WTP, YMA}. The coded caching approach has then been studied in different settings such as decentralized placement \cite{MaN2}, 
 caching with shared caches \cite{PUE,IZW,PeR,PNR2}, caching with users accessing multiple caches \cite{HKD,MKR,BrE}, caching with security \cite{STC} and secrecy constraints \cite{RPKP} for the demanded file contents, and many more.

In this work, we consider a network model as shown in Fig.~\ref{fig:setting}. There is a server with $N$ equal-length files, connected to $K$ users and to $\Lambda \leq K$ helper caches through a shared bottleneck link. Each helper cache is capable of storing $M_s$ files. Unlike the shared cache networks discussed in \cite{PUE,IZW,PeR,PNR2}, each user is also endowed with a cache of size $M_p$ files. The conceived model represents scenarios in a wireless network such as a large number of cache-aided users (for example, mobile phones, laptops, and other smart devices) are connected to the main server via a set of access points which is generally less in number than the number of users. These access points are provided with storage capacity such that the users connected to them have access to the cache contents. In this setting, two scenarios can arise. One scenario is a content placement agnostic of which user will be connected to which helper cache, and the other is a situation where the server is aware of the association of users to helper caches during the placement. The information about user-to-helper cache association may not be available in all cases. For instance, the users can be mobile, hence, the helper node or access point to which the users are connected also change according to their location. However, in some cases, it is possible to obtain the location information of the users a priori (for example, if the users are static).  Therefore, it is important to consider both the scenarios.

 A network model similar to that in Fig.~\ref{fig:setting} was considered in \cite{MeR} and \cite{PNR} to account for the secrecy constraints in the conventional shared cache setup. However, in \cite{MeR} and \cite{PNR}, the size of user cache was fixed to be unity ($M_p=1$) as the users' private caches stored only the one-time pads used for transmissions. 
 Another network model similar to ours is the hierarchical content delivery network considered in \cite{KNM}. The hierarchical network in \cite{KNM} consists of two layers of caches arranged in a tree-like structure with the server as root node and the users are connected to the leaf caches. Each leaf cache serves only a single user. Each parent cache communicates only with its children caches. Hence, there is no direct communication between the server and the users. Each user is able to retrieve its demanded file using the contents of the leaf cache it is accessing and the messages received from the adjacent parent cache. For this hierarchical network, a coded caching scheme based on decentralized placement was designed in \cite{KNM} to minimize the number of transmissions made in various layers. In our model, the placement is centralized and the communication between the server and the users happens over a single-hop.
 
 
  \begin{figure}[t]
 	\begin{center}
 		\captionsetup{justification=centering}
 		\includegraphics[width=0.5\textwidth]{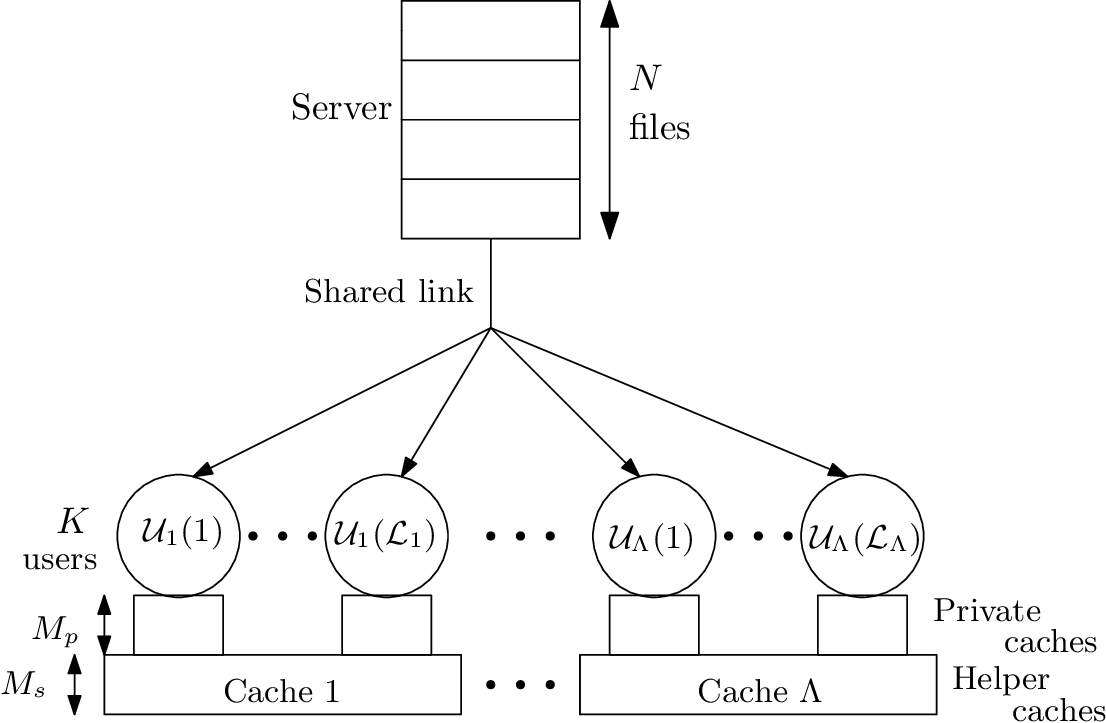}
 		\caption{Network model.}
 		\label{fig:setting}
 	\end{center}
 \end{figure}
 
 \subsection*{Our Contributions}
 We introduce a new network model where the helper caches and the users' dedicated caches coexist. The network model can be viewed as a generalization of the well-studied dedicated and shared cache networks. We study the network under two scenarios, one with prior knowledge of user-to-helper cache association and the other without any prior information about it. Our contributions are summarized below.
 \begin{itemize}
 		\item A centralized coded caching scheme is proposed for the case when the server has no information about the user-to-helper cache association during the placement phase (Theorem~\ref{thm1}). The scheme partitions each file into two parts of size $f$ and $(1-f)$ file units, respectively, where $f\in [0,1]$. Then, the optimal schemes, under uncoded placement, known for the dedicated \cite{MaN} and shared cache \cite{PUE} networks are  used to store and deliver each part of the library independently. In this way, maximum local caching gain is ensured for each user. Under the placement followed and for any value of $f$, the delivery policy is shown to be optimal using an index coding based converse. The rate achieved by the scheme is a function of $f$, and the parameter $f$ is chosen such that it minimizes the rate for a uniform user-to cache association (Section~\ref{sec:results}).	
 	\item Four achievable schemes are proposed for the case when the server is aware of the user-to-helper cache association during the placement phase.
 	\begin{enumerate}
 			\item  The first scheme is derived from the Maddah-Ali-Niesen scheme \cite{MaN}. Even though the applicability of the scheme is limited to a certain memory regime, the scheme is optimal (under uncoded placement) in the regime where it exists (Scheme $1$, Section~\ref{subsec:Scheme1}).
 		\item The second scheme follows from the scheme in Section~\ref{sec:results}, where the file divison is according to $f^{*}$ which is obtained by optimizing the rate with respect to the actual user-to-cache association (Section~\ref{subsec: profile_opt}). 
 		\item  The third scheme consists of novel placement and delivery techniques, where
 		each coded message broadcasted in the delivery phase is composed of the subfiles stored in both shared and private caches. 
 		This joint delivery scheme is designed without compromising on the local caching gain. The scheme is referred as Scheme $2$ later in the paper (Section~\ref{subsec:scheme2}). The scheme is also shown to be optimal (under no constraints on the cache placement) in certain memory regimes. The optimality is characterized by deriving a cut-set based information-theoretic lower bound on the optimal rate-memory tradeoff  (Section~\ref{sec:lb}). 
 		\item The fourth scheme is called as composite scheme,  in which the file library is optimally shared between Maddah-Ali Niesen scheme and our Scheme 2. The new scheme leverages the advantages provided by both the previous two schemes (Scheme in Section~\ref{subsec: profile_opt} and Scheme 2), and exhibits a better performance in all memory regimes compared to the scheme (Theorem~\ref{thm1}) in the user-to-cache association agnostic case (Section~\ref{subsec:MAN_scheme2}).
 	\end{enumerate} 
 \end{itemize}
 
\noindent  \textit{Notations:}
 For any integer $n$, $[n]$ denotes the set $\{1,2,\ldots,n\}$. For two positive integers $m,n$ such that $m <n$, $(m,n]$ denotes the set $\{m+1,m+2,\ldots,n\}$, and $[m,n]$ denotes the set $\{m,m+1,\ldots,n\}$. For a set $\{s_1,s_2,\ldots,s_n\}$, $s_{[1:m]}$ denotes the set $\{s_1,s_2,\ldots,s_m\}$, where $m <n$. The cardinality of a set $\mathcal{S}$ is denoted by $|\mathcal{S}|$. Binomial coefficients are denoted by $\binom{n}{k}$, where $\binom{n}{k}\triangleq \frac{n!}{k!(n-k)!}$ and $\binom{n}{k}=0$ for $n <k$. For a finite set $\mathcal{I}$, the lower convex envelope of the points $\{(i,f(i)): i \in \mathcal{I}\}$ is denoted by $Conv(f(i))$. 
 \par
 The remainder of the paper is organized as follows: Section~\ref{sec:setting} presents the network model and the problem setting. Section~\ref{sec:results} describes the coded caching scheme for the case when the server is not aware of the user-to-helper cache association prior to the placement. In Section~\ref{sec:setting2}, the other setting is considered, and the corresponding schemes are described. The numerical comparisons of the proposed schemes are given in Section~\ref{sec:analyses}. Section~\ref{sec:lb} presents a cut-set based lower bound for the network model, and lastly, in Section~\ref{sec:concl}, the results are summarized.

 \section{Problem Setting}
 \label{sec:setting}
 We consider a network model as illustrated in Fig.~\ref{fig:setting}. There is a server with access to a library of $N$ equal-length files $\mathcal{W}=\{W_1, W_2, \ldots, W_N\}$, and is connected to $K$ users through an error-free broadcast link. There are $\Lambda \leq K$ helper nodes that serve as caches, each of size equal to $M_s$ files. Each user $k \in [K]$ gets connected to one of the helper caches, and each user possesses a dedicated cache of size equal to $M_p$ files. The user-to-helper cache association can be arbitrary. We assume that $N \geq K$ and $M \leq N$, where $M \triangleq M_s+M_p$. We consider the two scenarios described in the following two subsections.
 \subsection{Server has no prior information about the association of users to helper caches}
 \label{subsec:settinga}
 When the server has no prior knowledge about the user-to-helper cache association, the system operates in three phases:

 	\textit{1) Placement phase}: In the placement phase, the server fills the helper caches and the user caches with the file contents adhering to the memory constraint. The contents can be kept in coded or uncoded form. During the placement, the server is unaware of the users' demands and the set of users accessing each helper cache. The above assumption is based on the fact that the content placement happens during off-peak times at which the users need not be necessarily connected to the helper caches. The contents stored at the helper cache $\lambda$ are denoted by $\mathcal{Z}_{\lambda}$, and the contents stored at the $k^{\textrm{th}}$ user's private cache are denoted by $Z_k$.
 		
 	\textit{2) User-to-helper cache association phase}: The content placement is followed by a phase where each user can connect to one of the $\Lambda$ helper caches. The set of users connected to the helper cache $\lambda \in [\Lambda]$ is denoted by $\mathcal{U}_{\lambda}$. The server gets to know each  $\mathcal{U}_{\lambda}$ from the respective helper caches.
 	Let $\mathcal{U}=\{\mathcal{U}_1,\mathcal{U}_2,\ldots,\mathcal{U}_{\Lambda}\}$ denote the user-to-helper cache association of the entire system. The set $\mathcal{U}$ forms a partition on the set of users $[K]$. The users connected to a helper cache can access the contents of the helper cache at zero cost. The number of users accessing each helper cache $\lambda$ is denoted by $\mathcal{L}_{\lambda}$, where $\mathcal{L}_{\lambda}=|\mathcal{U}_{\lambda}|$. Then, $\mathcal{L}=(\mathcal{L}_1,\mathcal{L}_2,\ldots,\mathcal{L}_{\Lambda})$ gives the overall association profile. Without loss of generality, it is assumed that $\mathcal{L}$ is arranged in the non-increasing order. 
 	
 	\textit{3) Delivery phase}: The delivery phase commences with the users revealing their demands to the server. Each user $k \in [K]$ demands one of the $N$ files from the server. We assume that the demands of the users are distinct. The demand of user $k \in [K]$ is denoted by $d_k$, and the demand vector is denoted as $\mathbf{d}=(d_1,d_2,\ldots,d_K)$. On receiving the demand vector $\mathbf{d}$, the server sends coded messages to the users over the shared link. Each user recovers its requested file using the available cache contents and the transmissions. 
 	
 \subsection{Server knows the user-to-helper cache association a priori}
 \label{subsec:settingb}
 When the server knows the user-to-helper cache association $\mathcal{U}$ before, the system operates in two phases: \textit{1) placement phase} and \textit{2) delivery phase}. The server fills the helper caches and the user caches by taking into account the user-to-helper cache association,  $\mathcal{U}$.
 
 \subsection*{Performance measure}
  In both cases, the overall length of the transmitted messages normalized with respect to unit file length is called rate and is denoted by $R(M_s,M_p)$. Our interest is in the worst-case rate, which corresponds to a case where each user demands a distinct file. The aim of any coded caching scheme is to satisfy all the user demands at a minimum rate. The optimal worst-case rate, denoted by $R^*(M_s,M_p)$, is defined as
\begin{equation*}
  R^{*}(M_s,M_p) \triangleq \inf\{R: R(M_s,M_p) \textrm{ is achievable}\}.
\end{equation*}
	Let $R^{*}_{\textrm{ded}}(M)$ and $R^{*}_{\textrm{shared}}(M)$ denote the optimal worst-case rates for dedicated and shared cache networks, respectively. The following lemma gives bounds on $R^{*}(M_s,M_p)$ in terms of $R^{*}_{\textrm{ded}}(M)$ and $R^{*}_{\textrm{shared}}(M)$.

\begin{lem}
	\label{lem1}
	For a network with $K$ users, each having a cache of size $M_p$ files and has access to one of the $\Lambda$ helper caches, each of size $M_s$ files, the optimal worst-case rate $R^{*}(M_s,M_p)$ satisfies the inequality
	\begin{equation}
	  R^{*}_{\textrm{ded}}(M) \leq R^{*}(M_s,M_p) \leq R^{*}_{\textrm{shared}}(M),
	  \label{eq:bound2}
	\end{equation}
	where $M \triangleq M_s+M_p$.
\end{lem}
\begin{IEEEproof}
	Consider a coded caching scheme for a shared cache network with $\Lambda$ caches, each of size $M$ files, that achieves the optimal rate-memory tradeoff $R^{*}_{\textrm{shared}}(M)$. At $M=M_s+M_p$, the above scheme achieves the optimal rate $R^{*}_{\textrm{shared}}(M)$. Consider a cache $s \in [\Lambda]$ in the shared cache network. The contents stored in the cache $s$ are denoted by $\mathcal{Z}_{s}$, where $|\mathcal{Z}_{s}|=M$ files. Then, $\mathcal{Z}_{s}$ can be written as the union of two disjoint sets as: $\mathcal{Z}_{s}=\mathcal{Z}_{\lambda}\cup {Z}_{p}$, where $|\mathcal{Z}_{\lambda}|=M_s$ files and $|{Z}_p|=M_p$ files. 
	
	Now, consider the network model in Fig.~\ref{fig:setting}. The placement followed in this setting is as follows: the content stored in the helper cache, $\lambda \in [\Lambda]$, is $\mathcal{Z}_{\lambda}$, and the content stored in the private cache of all the users in $\mathcal{U}_{\lambda}$ is $Z_{p}$. If we follow the delivery policy of the optimal shared cache scheme, we obtain $R(M_s,M_p)=R^{*}_{\textrm{shared}}(M)$. Thus, we can conclude that $R^{*}(M_s,M_p) \leq R^{*}_{\textrm{shared}}(M)$. 
	
	To  prove $ R^{*}_{\textrm{ded}}(M) \leq R^{*}(M_s,M_p)$, assume $Z^{*}$ and $D^{*}$ to be the placement and delivery policies that result in $R^{*}(M_s,M_p)$. The contents available to each user $k \in [K]$ from the placement $Z^{*}$ is $\mathcal{Z}_{\lambda_k} \cup Z_k$, where $\lambda_k$ is the helper cache accessed by the user $k$, $|\mathcal{Z}_{\lambda_k}|=M_s$ files, and $|Z_k|=M_p$ files. In a dedicated cache network with $K$ users, each having a cache of size $M=M_s+M_p$ files, it is possible to follow a placement such that the contents available to each user $k$ is exactly same as $\mathcal{Z}_{\lambda_k} \cup Z_k$. Then, following the delivery scheme $D^{*}$ results in  $R_{\textrm{ded}}(M)=R^{*}(M_s,M_p)$. Thus, we obtain $R^{*}_{\textrm{ded}}(M) \leq R^{*}(M_s,M_p)$. This completes  the proof of Lemma~\ref{lem1}.
\end{IEEEproof}
   
	Under uncoded placement, the optimal worst-case rate for the network model in Fig.~\ref{fig:setting} is denoted by $R^{*}_{\textrm{uncoded}}(M_s,M_p)$. Under the constraint of uncoded placement, the MaN scheme \cite{MaN}, and the shared cache scheme in \cite{PUE} achieve the optimal rate-memory curves for dedicated \cite{WTP} and shared cache \cite{PUE} networks, respectively. We denote the rate-memory tradeoff of the MaN scheme by $R^{*}_{\textrm{MaN}}(M)$, and the rate-memory tradeoff of the shared cache scheme in \cite{PUE} by $R^{*}_{\textrm{PUE}}(M)$. Then, the following corollary gives a bound on $R^{*}_{\textrm{uncoded}}(M_s,M_p)$ using $R^{*}_{\textrm{MaN}}(M)$ and $R^{*}_{\textrm{PUE}}(M)$.
	\begin{corollary}
	Under uncoded placement, we have 
	\begin{align*}
	  Conv \left( \frac{K-t_p}{t_p+1} \right)  & \leq R^{*}_{\textrm{uncoded}}(M_s,M_p) \leq\\ &  Conv \left( \frac{\sum_{n=1}^{\Lambda-t_s}\mathcal{L}_n\binom{\Lambda-n}{t_s}}{\binom{\Lambda}{t_s}} \right),
	\end{align*}
	where $t_p\triangleq\frac{KM}{N} \in [0,K]$, $t_s \triangleq \frac{\Lambda M}{N} \in [0,\Lambda]$, and $\mathcal{L}$ denotes the association profile.
	\end{corollary}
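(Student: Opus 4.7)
The plan is to restrict the proof of Lemma~1 to the class of schemes using uncoded placement, and then substitute the known closed-form expressions for the optimal uncoded rate–memory tradeoffs of the dedicated and shared cache networks at the memory point $M = M_s + M_p$. Since the upper bound of Lemma~1 is obtained by taking an optimal shared-cache scheme with cache size $M_s+M_p$ and splitting each cache into an $M_s$-sized helper part and an $M_p$-sized private part, and the lower bound is obtained by merging the two sub-caches of each user into a single cache of size $M_s+M_p$ in a dedicated setting, both reductions preserve the uncoded-placement property: no cache content is combined across files during the reduction.

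Building on this, I would first apply the upper-bound construction from Lemma~1 starting from the scheme of \cite{PUE}, which is optimal for shared-cache networks under uncoded placement \cite{PUE}. Its rate–memory tradeoff at integer $t_s = \Lambda(M_s+M_p)/N$ is
\begin{equation*}
R^{*}_{\textrm{PUE}}(M_s+M_p) = \frac{\sum_{n=1}^{\Lambda-t_s}\mathcal{L}_n\binom{\Lambda-n}{t_s}}{\binom{\Lambda}{t_s}},
\end{equation*}
and for non-integer $t_s$ the achievable rate is obtained by memory sharing, producing the lower convex envelope written in the corollary. This yields the right-hand inequality. For the left-hand inequality, I would invoke the converse direction of Lemma~1: any placement–delivery pair for the helper/private network induces a placement–delivery pair for a $K$-user dedicated cache network with cache size $M_s+M_p$, and uncoded placements map to uncoded placements. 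Hence $R^{*}_{\textrm{uncoded}}(M_s,M_p)$ is lower bounded by the optimal uncoded rate of the dedicated network, which by \cite{WTP} is the Maddah-Ali--Niesen curve $Conv((K-t_p)/(t_p+1))$ with $t_p = K(M_s+M_p)/N$.

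The only step needing care is the verification that the two reductions in Lemma~1 do not silently introduce coded combinations. On the upper-bound side this is immediate because the split $\mathcal{Z}_\lambda = \mathcal{Z}_{\lambda_1}\cup Z_p$ is a partition of the already-uncoded contents placed by the PUE scheme. On the lower-bound side it is equally clean: $\mathcal{Z}_{\lambda_k}\cup Z_k$ is a union of uncoded sub-caches and can be realized verbatim as an uncoded placement in the dedicated network. Once this is in place, the corollary follows by combining the two bounds and taking lower convex envelopes over the admissible memory points, so I do not expect any serious obstacle beyond this bookkeeping.
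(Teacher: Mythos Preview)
Your proposal is correct and follows essentially the same approach as the paper: the paper's proof simply says the corollary ``directly follows from \eqref{eq:bound2}'' after substituting the known optimal uncoded rates $R^{*}_{\textrm{MaN}}(M_s+M_p)=Conv\!\left(\frac{K-t_p}{t_p+1}\right)$ and $R^{*}_{\textrm{PUE}}(M_s+M_p)=Conv\!\left(\frac{\sum_{n=1}^{\Lambda-t_s}\mathcal{L}_n\binom{\Lambda-n}{t_s}}{\binom{\Lambda}{t_s}}\right)$. Your added remark that the two reductions in Lemma~1 preserve uncodedness makes explicit a point the paper leaves implicit, but the underlying argument is identical.
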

    \begin{IEEEproof}
      The proof directly follows from \eqref{eq:bound2}, since $R^{*}_{\textrm{MaN}}(M)=Conv\left( \frac{K-t_p}{t_p+1} \right)$, and $R^{*}_{\textrm{PUE}}(M)=Conv \left( \frac{\sum_{n=1}^{\Lambda-t_s}\mathcal{L}_n\binom{\Lambda-n}{t_s}}{\binom{\Lambda}{t_s}} \right)$, where $t_p = KM/N$ and $t_s = \Lambda M/N$.
    \end{IEEEproof}

\section{Networks with User-to-Helper Cache Association Unknown at the Placement Phase}
\label{sec:results}
In this section, we present a coded caching scheme for the setting discussed in Section~\ref{subsec:settinga}.
\begin{thm}
For a broadcast channel with $K$ users, each having a private cache of normalized size $M_p/N$, assisted by $\Lambda$ helper caches, each of normalized size ${M_s}/{N}$, the worst-case rate
\begin{equation}
\begin{aligned}
R(M_s,M_p) =  f R_s(t_s) +
(1-f)R_p(t_p)
\label{rate}
\end{aligned}
\end{equation}
 is achievable for an association profile $\mathcal{L}=(\mathcal{L}_1,\mathcal{L}_2,\ldots,\mathcal{L}_{\Lambda})$ and for every $f$ such that $  \frac{M_s}{N}\leq f \leq 1-\frac{M_p}{N}$, where $t_s \triangleq \frac{\Lambda M_s}{fN}$ and $t_p \triangleq \frac{KM_p}{(1-f)N}$. The terms $R_s(t_s)$ and $R_p(t_p)$ are defined as follows:
\begin{subequations}
\begin{align}
  R_s(t_s)& = (1-(t_s-\floor{t_s}))R_s(\floor{t_s})+(t_s-\floor{t_s})R_s(\ceil{t_s}),\\
  R_p(t_p) &= (1-(t_p-\floor{t_p}))R_p(\floor{t_p})+(t_p-\floor{t_p})R_p(\ceil{t_p}),
\end{align}
\label{eq:rs_rp}
\end{subequations}
where $R_s(u) = \frac{\sum_{n=1}^{\Lambda-u}\mathcal{L}_n\binom{\Lambda-n}{u}}{\binom{\Lambda}{u}}$ for $u \in [0, \Lambda]$, 
 and $R_p(v)=  \frac{K-v}{v+1}$ for $v \in [0,K]$.  
\label{thm1}
\end{thm}

\begin{IEEEproof}
 Consider a network as shown in Fig.~\ref{fig:setting}. The three phases involved in the scheme that achieves the performance in \eqref{rate} are described below.
\subsubsection{Placement phase}
\label{subsec:placem}
There are two sets of caches in the system: helper caches and users' private caches. Each file  $W_n \in \mathcal{W}$ is split into two parts: $W_n = \{W_n^{(s)},W_n^{(p)}\}$ such that $|W_n^{(s)}|=f$ file units and $|W_n^{(p)}|=(1-f)$ file units, where  $M_s/N \leq f \leq 1-M_p/N$. The helper caches are filled with contents from the set $\{W_n^{(s)}, \forall n \in [N]\}$, and the user caches are filled with contents from the set $\{W_n^{(p)}, \forall n \in [N]\}$. We define, $t_s\triangleq \frac{\Lambda M_s}{fN}$ and $t_p \triangleq \frac{KM_p}{(1-f)N} $.

We first describe the helper cache placement. If $t_s \in [0,\Lambda]$, each subfile in the set $\{W_n^{(s)}, \forall n \in [N]\}$ is divided into $\binom{\Lambda}{t_s}$ mini-subfiles, and each mini-subfile is indexed by a set $\tau$, where $\tau \subseteq [\Lambda]$ and $|\tau|=t_s$. Then, the helper cache placement is defined as
\begin{equation}
\mathcal{Z}_{\lambda} = \{W_{n,\tau}^{(s)} \textrm{\hspace{0.05cm}}, \forall n \in [N]: \tau \ni \lambda\}, \textrm{\hspace{0.2cm}} \forall \lambda \in [\Lambda].
\label{eq:scheme1placem}
\end{equation}
\noindent Each helper cache $\lambda$ stores $N\binom{\Lambda-1}{t_s-1}$ number of mini-subfiles, each of  size $f/\binom{\Lambda}{t_s}$ file units. When $t_s \not\in [0,\Lambda]$, the memory-sharing technique needs to be employed. Thus, the  helper cache placement satisfies the memory constraint. 

To fill the user caches, each subfile in the set $\{W_n^{(p)}, \forall n \in [N]\}$ is split into $\binom{K}{t_p}$ mini-subfiles, if $t_p \in [0,K]$. Otherwise, the memory-sharing technique is employed. The mini-subfiles are indexed by sets $\rho \subseteq [K]$ such that $|\rho|=t_p$. The file contents placed in the $k^{th}$ user's private cache are given by
$Z_k = \{W_{n,\rho}^{(p)} \textrm{\hspace{0.05cm}}, \forall n \in [N]: \rho \ni k \}\textrm{\hspace{0.05cm}}.$ The user cache placement also obeys the memory constraint. Consider a user $k \in [K]$, there are $N\binom{K-1}{t_p-1}$ number of mini-subfiles with $k \in \rho$. Each mini-subfile is of size $(1-f)/\binom{K}{t_p}$ file units. Thus, we obtain $N\binom{K-1}{t_p-1}(1-f)/\binom{K}{t_p}=M_p$.

\subsubsection{User-to-helper cache association phase}
The placement is done without knowing the identity of the users getting connected to each helper cache. Each user accesses one of the helper caches. Let the user-to-helper cache association be $\mathcal{U}$ with a profile $\mathcal{L}=(\mathcal{L}_1,\mathcal{L}_2,\ldots,\mathcal{L}_{\Lambda})$.

\subsubsection{Delivery phase}
In the delivery phase, each user demands one of the $N$ files. We consider a distinct demand scenario.  Let the demand vector be $\mathbf{d}=(d_1,d_2,\ldots,d_K)$. Since each demanded file consists of two parts, $W_n^{(s)}$ and $W_n^{(p)}$, the delivery also consists of two sets of transmissions. We first consider the set $\{W_n^{(s)}, \forall n \in \mathbf{d}\}$.  The subfiles belonging to this set are transmitted using the shared cache delivery scheme in \cite{PUE}. For every $j \in [\mathcal{L}_1]$, the server forms sets $\mathcal{S} \subseteq [\Lambda]$, $|\mathcal{S}|=t_s+1$, and makes the transmission
$ X_{\mathcal{S},j}= \underset{{\lambda \in \mathcal{S}:\mathcal{L}_{\lambda}\geq j}}{\bigoplus}W^{(s)}_{d_{\mathcal{U}_{\lambda}(j)},\mathcal{S}\backslash\lambda}\textrm{\hspace{0.05cm}}.
$
 
Next, consider the set $\{W^{(p)}_n, \forall n\in \mathbf{d}\}$. The subfiles belonging to this set are delivered using the transmission scheme in \cite{MaN}. For each set $\mathcal{P} \subseteq [K]$ such that $|\mathcal{P}|=t_p+1$, the server transmits the message
$ X_{\mathcal{P}}= \underset{k \in \mathcal{P}}{\bigoplus}W^{(p)}_{d_k,\mathcal{P}\backslash k}\textrm{\hspace{0.05cm}}.
$
\par
\textit{Decoding:} We first look at the decoding of the subfiles of the set $\{W^{(s)}_{n}, \forall n \in \mathbf{d}\}$. Consider a transmission ${X}_{\mathcal{S},j}$, and a user $\mathcal{U}_{\lambda}(j)$ benefiting from $X_{\mathcal{S},j}$. Then, the message received at user $\mathcal{U}_{\lambda}(j)$ is of the form:
$ W^{(s)}_{d_{\mathcal{U}_{\lambda}(j)},\mathcal{S}\backslash \lambda} \underbrace{\underset{\substack{\lambda^{\prime} \in \mathcal{S}: \mathcal{L}_{\lambda^{\prime}} \geq j\\ \lambda^{\prime}\neq \lambda}}{\bigoplus} W^{(s)}_{d_{\mathcal{U}_{\lambda^{\prime}}(j)},\mathcal{S}\backslash \lambda^{\prime} }}_\textrm{known to user $\mathcal{U}_{\lambda}(j)$}.$
\noindent Likewise, user $\mathcal{U}_{\lambda}(j)$ can decode the rest of the mini-subfiles of $W^{(s)}_{d_{\mathcal{U}_{\lambda}(j)}}$. The decoding of the subfiles associated with the set $\{W^{(p)}_{n},\forall n \in \mathbf{d}\}$ is also done in a similar way. The users' demands get served once the subfiles  $W^{(s)}_n$ and $W^{(p)}_n, \forall n \in \mathbf{d}$, are recovered.
\par
\textit{Performance measure:} The rate calculation can be decomposed into two parts as done previously for the placement and delivery phases. The rate associated with the transmission of the subfiles of the set $\{W^{(s)}_n, \forall n \in \mathbf{d}\}$ is calculated as follows: the number of transmissions associated with $(t_s+1)-$sized sets $\mathcal{S}$ containing $\min\{\lambda: \lambda \in \mathcal{S}\}=1$ is $\mathcal{L}_1$. Similarly, the number of transmissions associated with sets $\mathcal{S}$ containing $\min\{\lambda: \lambda \in \mathcal{S}\}=2$ is $\mathcal{L}_2$. By proceeding further in the same manner up to $\min\{\lambda: \lambda \in \mathcal{S}\}=\Lambda-t_s$, the total number of transmissions is obtained as $\sum_{n=1}^{\Lambda-t_s}\mathcal{L}_n\binom{\Lambda-n}{t_s}$, and each transmission is of size $f/\binom{\Lambda}{t_s}$ file units. Thus, the rate obtained for this case is ${ \sum_{n=1}^{\Lambda-t_s}\mathcal{L}_n\binom{\Lambda-n}{t_s}f}/{\binom{\Lambda}{t_s}}$.

The rate associated with the transmission of the subfiles $\{W^{(p)}_n,\forall n \in \mathbf{d}\}$ is calculated as follows:
there are $\binom{K}{t_p+1}$ number of transmissions, each of size $(1-f)/\binom{K}{t_p}$ file units. Therefore, the rate associated with the transmissions of type $X_{\mathcal{P}}$, $ \mathcal{P}\subseteq [K]$, $|\mathcal{P}|=t_p+1$, is ${\binom{K}{t_p+1}(1-f)}/{\binom{K}{t_p}}$.
Thus, the rate $R(M_s,M_p)$ is obtained as
\begin{align}
 R(M_s,M_p)&=\frac{ \displaystyle \sum_{n=1}^{\Lambda-t_s}\mathcal{L}_n \binom{\Lambda-n}{t_s}f}{\binom{\Lambda}{t_s}}+\frac{(K-t_p)(1-f)}{t_p+1},
 \label{eq:rate_sum}
\end{align}
when $t_s \in [0,\Lambda]$ and $t_p \in [0,K]$. The expressions in \eqref{eq:rs_rp} are obtained by considering the memory-sharing instances. While performing memory-sharing in shared caches, $1-(t_s-\floor{t_s})$ is the fraction of memory and the fraction of $W_n^{(s)}, \forall n \in [N]$, that are stored and delivered according to $\floor{t_s}$. The remaining $t_s-\floor{t_s}$ fractions of the helper cache memory and $W_n^{(s)}$  are filled and sent according to $\ceil{t_s}$. Similarly, $1-(t_p-\floor{t_p})$ and $t_p-\floor{t_p}$ are the fractions of dedicated cache memory allocated corresponding to $\floor{t_p}$ and $\ceil{t_p}$, respectively. Thus, $R_s(t_s)$ and $R_p(t_p)$ are obtained as follows:
\begin{equation*}
	\begin{aligned}
	R_s(t_s)& = (1-(t_s-\floor{t_s}))R_s(\floor{t_s})+(t_s-\floor{t_s})R_s(\ceil{t_s}),\\
	R_p(t_p) &= (1-(t_p-\floor{t_p}))R_p(\floor{t_p})+(t_p-\floor{t_p})R_p(\ceil{t_p}).
	\end{aligned}
\end{equation*}

Since the user-to-cache association is not known at the placement, the parameter $f$ is chosen such that it minimizes the rate $R(M_s,M_p)$ in \eqref{rate} for a uniform association profile. That is,
 \begin{align}
   f^{*} & =\underset{\frac{M_s}{N}\leq f \leq 1-\frac{M_p}{N}}{\arg\min}\textrm{\hspace{0.2cm}}fR_s(t_s)+(1-f)R_p(t_p), \label{eq:f_value1}
   \end{align}
   Thus, we obtain the required value as:
   \begin{small}
   \begin{align}
      f^\ast & = \frac{M_s}{N}\max\left(1,\frac{(1+\frac{KM_p}{N})\sqrt{\Lambda(\Lambda+1)} - \frac{\Lambda M_p}{N}\sqrt{K(K+1)}}{\sqrt{K(K+1)}\frac{M_p}{N}+ \sqrt{\Lambda(\Lambda+1)}\frac{M_s}{N}}\right).
   \label{eq:f_value2}
\end{align}
\end{small}
 The transition from \eqref{eq:f_value1} to \eqref{eq:f_value2} is given in Appendix~\ref{sec:appendix1}. In the case of uniform profiles, for $u \in [0,\Lambda]$, we have $R_s(u)=\frac{K(\Lambda-u)}{\Lambda(u+1)}$. Note that the value of $f$ that minimizes the rate  for a uniform profile will not result in the least $R(M_s,M_p)$ for an arbitrary profile. However, the rationale behind considering the uniform profile is that it results in the least rate among all the profiles for a given shared cache system. Whereas, the profile $\mathcal{L}=(K,0,\ldots,0)$ is not considered as it does not support any multicasting opportunity, and the obvious choice of $f^{*}$ is $M_s/N$ in that case.
\end{IEEEproof}
\begin{rem}
	By approximating $\sqrt{K(K+1)}\approx K$ and $\sqrt{\Lambda(\Lambda+1)}\approx \Lambda$ in \eqref{eq:f_value2}, we get $f^\ast = \max\left(\frac{M_s}{N},\frac{\Lambda M_s}{\Lambda M_s+KM_p}\right)$. 
	 Note that the fraction $f^\ast = \frac{\Lambda M_s}{\Lambda M_s+KM_p}$ is the ratio of the total helper cache memory in the system, $\Lambda M_s$ to the total system memory $\Lambda M_s+KM_p$. Further, if $f^\ast = {M_s}/{N}$, the same content will be stored across all the helper caches, and there will be no transmission corresponding to the optimal shared cache scheme ($f^\ast = M_s/N \implies t_s = \Lambda$).  
\end{rem}

\begin{rem}
	When $M_s=0$, we obtain $f^{*}=0$, and the proposed scheme reduces to the optimal MaN scheme in \cite{MaN}. Likewise, when $M_p=0$, we get $f^{*}=1$, and the proposed scheme recovers the optimal shared cache scheme in \cite{PUE}.
\end{rem}

\subsection{Converse}
 In this subsection, we show that for any $\frac{M_s}{N} \leq f \leq 1-\frac{M_p}{N}$, our delivery scheme is optimal under the placement policy described in the proof of Theorem~\ref{thm1}. The converse is derived using index coding techniques. Before proving the optimality, we take a detour into the index coding problem, and see its relation to the coded caching problem.

\subsubsection{Preliminaries on index coding}
The index coding problem considered in \cite{ABKSW} consists of a sender who wishes to communicate $L$ messages $x_j$, $j \in [L]$, to a set of $L$ receivers over a shared noiseless link. Each message $x_j$ is of size $|x_j|$ bits. Each receiver $j \in [L]$ wants $x_j$, and knows a subset of messages as side-information. Let $\mathcal{K}_j$ denote the indices of the messages known to the $j^{th}$ receiver. Then, $\mathcal{K}_j \subseteq [1:L] \backslash \{j\}$. To satisfy the receivers' demands, the sender transmits a message $X$ of length $l$ bits. Based on the side-information $\mathcal{K}_j$ and the received message $X$, each receiver $j \in [L]$ retrieves the message $x_j$. This index coding problem can be represented using a directed graph $\mathcal{G}=(\mathcal{V},\mathcal{E})$, where $\mathcal{V}$ is the set of vertices and $\mathcal{E}$ is the set of directed edges of the graph. Each vertex in $\mathcal{G}$ represents a receiver $j$ and its demanded message $x_j$. A directed edge from vertex $i$ to vertex $j$ represents that the receiver $j$ knows the message $x_i$.

The following lemma gives a bound on the transmission length $l$ using $\mathcal{G}$ and is drawn from [Corollary 1, \cite{ABKSW}].

\begin{lem}[Cut-set type converse \cite{ABKSW}]
For an index coding problem represented by the directed graph $\mathcal{G}=(\mathcal{V},\mathcal{E})$, if each receiver $j \in [L]$ retrieves its wanted message from the transmission $X$ using its side-information, then the following inequality holds
\begin{equation}
\sum_{j \in \mathcal{J}} \frac{|x_j|}{l} \leq 1
\label{eq:index}
\end{equation}	
for every $\mathcal{J} \subseteq [1:L]$ such that the subgraph of $\mathcal{G}$ over $\mathcal{J}$ does not contain a directed cycle.
\label{conv:lem}
\end{lem}
From \eqref{eq:index}, we get 
\begin{equation}
l \geq \sum_{j \in \mathcal{J}} {|x_j|} .
\label{eq:ratebound}
\end{equation}
By finding a maximum acylic subgraph of $\mathcal{G}$, we get a tighter lower bound on the transmission length, $l$.

\subsubsection{Relation between index coding and coded caching problems}
For a fixed placement and a demand, the delivery phase of a coded caching problem can be viewed as an index coding problem. Let $\mathbf{Z}$ denote the placement policy followed in the scheme. Then, the index coding problem induced by $\mathbf{Z}$, an association profile $\mathcal{L}$, and a distinct demand vector $\mathbf{d}$ is denoted by $\mathcal{I}(\mathbf{Z},\mathcal{L},\mathbf{d})$. The index coding problem $\mathcal{I}(\mathbf{Z}, \mathcal{L},\mathbf{d})$ has $K\big(\binom{\Lambda-1}{t_s}+\binom{K-1}{t_p}\big)$ messages, where each message corresponds to a subfile of the demanded files which is not available to the users from their caches. Thus, the messages in $\mathcal{I}(\mathbf{Z},\mathcal{L,\mathbf{d}})$ are constituted by the subfiles $W^{(s)}_{d_i,\tau}$ and $W^{(p)}_{d_i,\rho}$, where $i \in [K]$, $\rho \not\ni i$, and $\tau \not\ni c(i)$ which is the helper cache accessed by user $i$. In an index coding problem, if  a receiver demands multiple messages, it can be equivalently seen as that many receivers with the same side-information as before, and each demanding a single message. Therefore, the number of receivers in $\mathcal{I}(\mathbf{Z},\mathcal{L},\mathbf{d})$ is $K\big(\binom{\Lambda-1}{t_s}+\binom{K-1}{t_p}\big)$. Thus, the graph $\mathcal{G}$ representing $\mathcal{I}(\mathbf{Z},\mathcal{L},\mathbf{d})$ contains $K\big(\binom{\Lambda-1}{t_s}+\binom{K-1}{t_p}\big)$ vertices. Each vertex represents a subfile and the user demanding it. The edges in $\mathcal{G}$ are determined by the cache contents known to each user. Let $W^{(s)}_{d_{i_1}, \tau_1}$ and $W^{(s)}_{d_{i_2}, \tau_2}$ be two vertices in $\mathcal{G}$ such that $i_1,i_2 \in [K]$ and $i_1 \neq i_2$. Let $c(i_1)$ and $c(i_2)$ denote the helper caches to which the users $i_1$ and $i_2$ are connected, respectively. Then, there exists a directed edge from $W^{(s)}_{d_{i_1}, \tau_1}$ to $W^{(s)}_{d_{i_2}, \tau_2}$ if $c(i_2) \in \tau_1$, and vice versa if $c(i_1) \in \tau_2$. There does not exist a directed edge between $W^{(s)}_{d_{i_1}, \tau_1}$ and $W^{(s)}_{d_{i_2}, \tau_2}$ if $c(i_1)=c(i_2)$. Similarly, assume $W^{(p)}_{d_{i_1},\rho_1}$ and $W^{(p)}_{d_{i_2},\rho_2}$ to be two vertices in $\mathcal{G}$. There exists a directed edge from $W^{(p)}_{d_{i_1},\rho_1}$ to $W^{(p)}_{d_{i_2},\rho_2}$ if $i_2 \in \rho_1$, and vice versa if $i_1 \in \rho_2$. Next, consider the vertices $W^{(s)}_{d_{i_1}, \tau_1}$ and $W^{(p)}_{d_{i_2},\rho_2}$. There exists a directed edge from $W^{(s)}_{d_{i_1}, \tau_1}$ to $W^{(p)}_{d_{i_2},\rho_2}$ if $c(i_2) \in \tau_1$, and vice versa if $i_1 \in \rho_2$.

According to Lemma~\ref{conv:lem}, a subgraph of $\mathcal{G}$ that does not contain any cycle needs to be constructed to obtain a lower bound on the achievable rate. For the sake of simplicity, assume each file to be of unit length. 
 Consider a subgraph $\mathcal{J}$ of $\mathcal{G}$ formed by the subfiles 
\begin{align}
\bigcup_{i \in [K]}\big\{W^{(s)}_{d_i,\tau}, W^{(p)}_{d_i,\rho}:\textrm{\hspace{0.1cm}} & \tau \subseteq [\Lambda] \backslash [c(i)], |\tau|=t_s, \\ &  \rho \subseteq [K] \backslash [i], |\rho|=t_p \big\}.
\label{eq:acyclic}
\end{align}
We need to show that $\mathcal{J}$ is an acyclic subgraph of $\mathcal{G}$. Recall that in the coded caching system, the helper caches are arranged in the non-increasing order of the number of users served by them. Without loss of generality, we assume that the users are also ordered in the caching system. The subfiles in \eqref{eq:acyclic} can be written as $H_1 \cup H_2$, where
\begin{align*}
& H_1 \triangleq \cup_{i \in [K]}\big\{W^{(s)}_{d_i,\tau}: \tau \subseteq [\Lambda] \backslash [c(i)], |\tau|=t_s \big\} \textrm{ and }\\
& H_2 \triangleq \cup_{i \in [K]} \big\{W^{(p)}_{d_i,\rho}: \rho \subseteq [K] \backslash [i], |\rho|=t_p \big\}.
\end{align*}
Consider any two subfiles $W^{(s)}_{d_{i_1},\tau_1}$ and $W^{(s)}_{d_{i_2},\tau_2} \in H_1$. If there exists an edge from $W^{(s)}_{d_{i_1},\tau_1}$ to $W^{(s)}_{d_{i_2},\tau_2}$, then $c(i_2) \in \tau_1$ which implies $c(i_2) > c(i_1)$. Therefore, it is not possible to have an edge going from $W^{(s)}_{d_{i_2},\tau_2}$ to  $W^{(s)}_{d_{i_1},\tau_1}$. Next, consider any two subfiles $W^{(p)}_{d_{i_1},\rho_1}$ and $W^{(p)}_{d_{i_2},\rho_2} \in H_2$. If there exists an edge from $W^{(p)}_{d_{i_1},\rho_1}$ to $W^{(p)}_{d_{i_2},\rho_2}$, then $i_2 \in \rho_1$ which implies $i_2 >i_1$. Therefore, an edge from $W^{(p)}_{d_{i_2},\rho_2}$ to $W^{(p)}_{d_{i_1},\rho_1}$ does not exist. 

Next, consider the subfiles $W^{(s)}_{d_{i_1},\tau_1}, W^{(s)}_{d_{i_2},\tau_2} \in H_1$ and $W^{(p)}_{d_{i_3},\rho_3},W^{(p)}_{d_{i_4},\rho_4}  \in H_2$. Consider a scenario where there is an edge $e_1$ emanating from $W^{(s)}_{d_{i_1},\tau_1}$ to $W^{(s)}_{d_{i_2},\tau_2}$, another edge $e_2$ from $W^{(s)}_{d_{i_2},\tau_2}$ to $W^{(p)}_{d_{i_3},\rho_3}$, and there is an edge $e_3$ from $W^{(p)}_{d_{i_3},\rho_3}$ to $W^{(p)}_{d_{i_4},\rho_4}$. We need to show that it is not possible to have another edge $e_4$ from $W^{(p)}_{d_{i_4},\rho_4}$ to $W^{(s)}_{d_{i_1},\tau_1}$. We prove this by contradiction. Assume that $e_4$ exists. Then, the above set of four subfiles form a cycle. Edge $e_1$ implies $c(i_2) \in \tau_1$ and $c(i_2) > c(i_1)$. Since the users are also ordered, we can say that the user index $i$ is greater than its helper cache index $c(i)$. Therefore, we obtain $i_2 > i_1$. Edge $e_2$ implies $c(i_3) \in \tau_2$ and $c(i_3) >c(i_2)$ which, in turn, gives $i_3 >i_2$. The relation $i_3 >i_2$ ensures that an edge from $W^{(p)}_{d_{i_3},\rho_3}$ to $W^{(s)}_{d_{i_2},\tau_2}$ does not exist in $\mathcal{J}$. Similarly, from edge $e_3$, we conclude that $i_4 >i_3$. Thus, we get $i_4>i_3>i_2>i_1$. Edge $e_4$ gives $i_1 > i_4$, which is not possible. Hence, edge $e_4$ does not exist. Along the same lines, we can show that edge $e_2$ does not exist in a case where edges $e_1$, $e_3$, and $e_4$ are present. Hence, we proved that the  subfiles in \eqref{eq:acyclic} form an acyclic subgraph $\mathcal{J}$ over the graph $\mathcal{G}$.

Using Lemma~\ref{conv:lem}, we can write $R(M_s,M_p)\geq |H_1|+|H_2|$. The number of subfiles in $H_1$ is obtained as: $\sum_{n=1}^{\Lambda}\mathcal{L}_n\binom{\Lambda-n}{t_s}$, and the number of subfiles in $H_2$ is obtained as: $\sum_{n=1}^{K}\binom{K-n}{t_p}=\binom{K}{t_p+1}$. Each subfile in $H_1$ is of size $f/\binom{\Lambda}{t_s}$ file units, and each subfile in $H_2$ is of size $(1-f)/\binom{K}{t_p}$ units, where $\frac{M_s}{N} \leq f \leq 1-\frac{M_p}{N}$. Thus, we get
\begin{align*}
R(M_s,M_p) \geq \frac{f\sum_{n=1}^{\Lambda}\mathcal{L}_n\binom{\Lambda-n}{t_s}}{\binom{\Lambda}{t_s}}  +\frac{(1-f)(K-t_p) }{t_p+1}. 
\end{align*}
From \eqref{eq:rate_sum}, we know that the rate 
\begin{align*}
R(M_s,M_p) = \frac{f\sum_{n=1}^{\Lambda}\mathcal{L}_n\binom{\Lambda-n}{t_s}}{\binom{\Lambda}{t_s}}  +\frac{(1-f)(K-t_p) }{t_p+1}
\end{align*}
is achievable. This proves the optimality of our delivery scheme under the placement policy described in the proof of Theorem~\ref{thm1}. The converse is derived assuming $t_s$ and $t_p$ are integers, even otherwise, the above converse holds in the memory-sharing cases. 

Now, we describe the reason behind following such a placement in the scheme. The cache contents known to user $k \in [K]$ are $Z_k$ and $\mathcal{Z}_{\lambda}$, where $\lambda$ is the helper cache accessed by user $k$. If $Z_k \cap \mathcal{Z}_{\lambda}=\phi$, then the amount of file contents available to each user from the placement phase is maximum. In addition, during the placement, the server is unaware of the user-to-helper cache association. The above two reasons made us divide each file into two parts, and each part being  exclusively used to fill one type of caches. From the previous results \cite{YMA}, it is known that the placement in \cite{MaN} is optimal under uncoded prefetching schemes. Therefore, we followed the optimal prefetching scheme in \cite{MaN} for the helper caches and the private caches using $\{W^{(s)}_n, \forall n \in [N]\}$ and $\{W^{(p)}_n, \forall n \in [N]\}$, respectively. Although the placement policies followed are individually optimal, the overall optimality of the above placement for this network model is not known.

We now illustrate the scheme and its optimality using an example.
\begin{example}
$N=4$, $K=4$, $\Lambda =2$, $M_s = 2$, $M_p=0.5$, and $\mathcal{L}=(3,1)$
\label{ex:exmp1}
\end{example}
Consider a scenario where there is a server with $N=4$ files, $\mathcal{W}=\{W_{1},W_{2}, W_{3},W_{4}\}$, each of unit size. The server is connected to $\Lambda=2$ helper caches and to $K=4$ users. Each helper cache is of size $M_s = 2$ units, and the private cache's size is half a unit.

Each file $W_n, n\in [4]$, is split into two subfiles: $W_n=\{W^{(s)}_n, W^{(p)}_n\}$. Since $f^{*}$ is obtained as $3/4$, the  size of each subfile is as follows: $|W^{(s)}_n|=3/4$ units and $|W^{(p)}_n|=1/4$ units. Therefore, we get $t_s = 4/3$ and $t_p =2$.

Since $t_s \not\in [0,2]$, the memory-sharing technique needs to be employed. Corresponding to $\floor{t_s} = 1$ and $\ceil{t_s}=2$, we get $M_{s_1}=3/2$ and $M_{s_2}=3$, respectively. Then, $M_s =\alpha M_{s_1} + (1-\alpha)M_{s_2} $, where $\alpha = 2/3$. Therefore, each subfile $W_{n}^{(s)}$ is first split into two parts: $W_n^{(s),(\alpha)}$ and $W_n^{(s),(1-\alpha)}$, each of size $2f^{*}/3$ and $f^{*}/3$ file units. The mini-subfiles of the sets $\{W_n^{(s),\alpha}, \forall n \in [N]\}$ and $\{W_n^{(s),(1-\alpha)}, \forall n \in [N]\}$  are placed in the helper caches according to \eqref{eq:scheme1placem}, using $\floor{t_s}$ and $\ceil{t_s}$ respectively. Thus, the contents stored in each helper cache are:
\begin{equation}
\begin{aligned}
\mathcal{Z}_1 &= \{W_{n,1}^{(s),(\alpha)},W_{n,12}^{(s),(1-\alpha)}, \forall n \in [4]\},\hspace{0.3cm}\\
\mathcal{Z}_2 &= \{W_{n,2}^{(s),(\alpha)},W_{n,12}^{(s),(1-\alpha)}, \forall n \in [4]\}.
\label{eq:cache}
\end{aligned} 
\end{equation}
The  user caches are filled using the set of subfiles $\{W_n^{(p)}, \forall n \in [4]\}$. Each subfile in $\{W_n^{(p)}, \forall n \in [4]\}$ is further divided into $6$ mini-subfiles. The users' cache contents are:
\begin{equation}
\begin{aligned}
Z_1 &=\big\{W^{(p)}_{n,12}, W^{(p)}_{n,13}, W^{(p)}_{n,14}\textrm{\hspace{0.05cm}},\forall n \in [4]\big\},\\
Z_2 & = \big\{W^{(p)}_{n,12}, W^{(p)}_{n,23}, W^{(p)}_{n,24}\textrm{\hspace{0.05cm}},\forall n \in [4]\big\},\\
Z_3 &= \big\{W^{(p)}_{n,13}, W^{(p)}_{n,23}, W^{(p)}_{n,34}\textrm{\hspace{0.05cm}},\forall n \in [4]\big\}, \\
Z_4 &= \big\{W^{(p)}_{n,14}, W^{(p)}_{n,24}, W^{(p)}_{n,34}\textrm{\hspace{0.05cm}}, \forall n \in [4]\big\}.
\label{eq:userplace}
\end{aligned}
\end{equation}
Note that the placement in the helper caches and the user caches satisfies the memory constraint. Assume that the user-to-helper cache association is $\mathcal{U}=\{\{1,2,3\},\{4\}\}$ with an association profile $\mathcal{L}=(3,1)$. For a demand vector $\mathbf{d}=(1,2,3,4)$, the transmissions are listed below:
\begin{align*}
  X_{\{12\},1}^{(\alpha)} & = W^{(s),(\alpha)}_{1,2} \oplus W^{(s),(\alpha)}_{4,1},\textrm{\hspace{0.3cm}}
   X_{\{12\},2}^{(\alpha)}  = W^{(s),(\alpha)}_{2,2}, \textrm{\hspace{0.3cm}}\\ X_{\{12\},3}^{(\alpha)} & = W^{(s),(\alpha)}_{3,2},
  X_{\{123\}} = W_{1,23}^{(p)} \oplus W_{2,13}^{(p)} \oplus W_{3,12}^{(p)},\textrm{\hspace{0.3cm}} \\ X_{\{124\}} & = W_{1,24}^{(p)} \oplus W_{2,14}^{(p)} \oplus W_{4,12}^{(p)},\\
  X_{\{134\}}  &= W_{1,34}^{(p)} \oplus W_{3,14}^{(p)} \oplus W_{4,13}^{(p)},\textrm{\hspace{0.3cm}} \\
  X_{\{234\}}& = W_{2,34}^{(p)} \oplus W_{3,24}^{(p)} \oplus W_{4,23}^{(p)}.
\end{align*}
The decoding is straightforward.  Each user can recover its desired subfiles from the above transmissions using its side-information. Thus, the rate achieved is $R(2,1/2)= 3/4 + 1/6=11/12$.
\par

\textit{Optimality}: Under the placement in \eqref{eq:cache} and \eqref{eq:userplace}, we show that the obtained rate is optimal. Let the placement performed in \eqref{eq:cache} and \eqref{eq:userplace} be jointly represented as $\mathbf{Z}$. Then, the index coding problem $\mathcal{I}(\mathbf{Z},\mathcal{L},\mathbf{d})$ induced by $\mathbf{Z},\mathcal{L}=(3,1),$ and $\mathbf{d}=(1,2,3,4)$ consists of $16$ messages and $16$ receivers (in the coded caching problem, each of the $4$ user wants $4$ more subfiles of its demanded file). This index coding problem can be represented using a directed graph $\mathcal{G}$ containing $16$ vertices, each simultaneously representing a user and a mini-subfile. The mini-subfiles in $\mathcal{G}$ are:  $W^{(s),(\alpha)}_{1,2},W^{(s),(\alpha)}_{2,2},W^{(s),(\alpha)}_{3,2},W^{(s),(\alpha)}_{4,1}$, $ W^{(p)}_{1,23},W^{(p)}_{1,24},W^{(p)}_{1,34},W^{(p)}_{2,13},W^{(p)}_{2,14},W^{(p)}_{2,34},W^{(p)}_{3,12},W^{(p)}_{3,14},$ $W^{(p)}_{3,24},W^{(p)}_{4,12},W^{(p)}_{4,13},W^{(p)}_{4,23}$. Then, the following set of mini-subfiles
 $\{W^{(s),(\alpha)}_{1,2},W^{(s),(\alpha)}_{2,2},W^{(s),(\alpha)}_{3,2}, W^{(p)}_{1,23}$, $W^{(p)}_{1,24},W^{(p)}_{1,34},W^{(p)}_{2,34}\}$ form an acyclic subgraph $\mathcal{J}$ over $\mathcal{G}$. The acyclic subgraph $\mathcal{J}$ is shown in Fig.~\ref{fig:graph}. Then, we get $R(1,1) \geq 3/4 + 1/6 =11/12$. Thus, the optimality of the delivery scheme is proved under the placement followed.

 \begin{figure}[h]
	\begin{center}
		\captionsetup{justification=centering}
		\includegraphics[width=0.85\columnwidth]{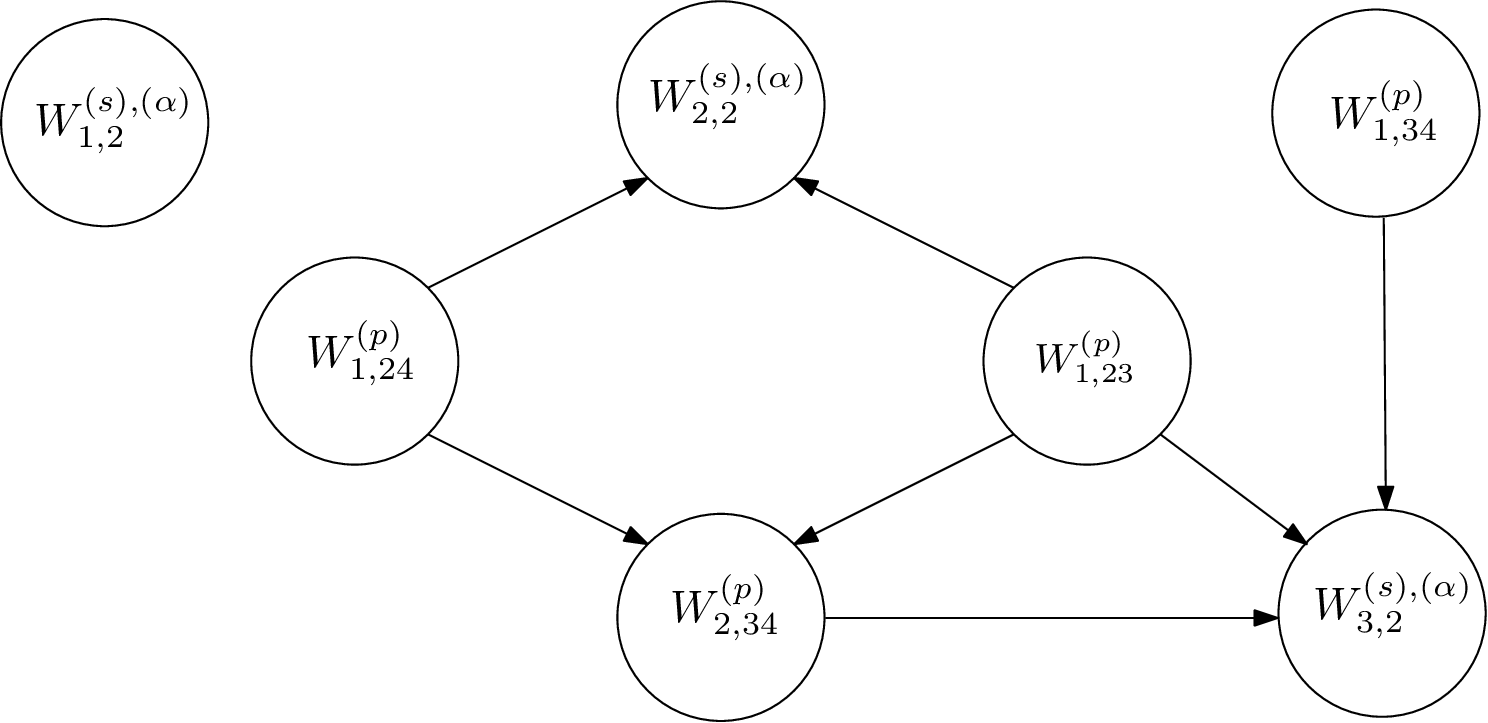}
		\caption{An acyclic subgraph $\mathcal{J}$.}
		\label{fig:graph}
	\end{center}
\end{figure}

\section{Networks with User-to-Helper Cache Association Known}
\label{sec:setting2}
We now consider the network model in Fig.~\ref{fig:setting} with the user-to-helper cache association, $\mathcal{U}$, known during the cache placement itself. The prior knowledge of $\mathcal{U}$ helps to design schemes with better performance than the scheme described in Section~\ref{sec:results}. For the setting discussed in Section~\ref{subsec:settingb}, we propose four coded caching schemes, and some of which exhibit optimal performance in certain memory regimes.
\subsection{Scheme 1}
\label{subsec:Scheme1}
The following theorem presents a coded caching scheme that is optimal under uncoded placement. The scheme is based on the MaN scheme in \cite{MaN}, and exists only for certain values of $M_s$ and $M_p$.
\begin{thm}
	Consider a broadcast channel with $K$ users, each having a cache of size $M_p$ files, assisted by $\Lambda$ helper caches, each of normalized size $M_s/N $. For a user-to-helper cache association $\mathcal{U}=\{\mathcal{U}_1, \mathcal{U}_2, \ldots, \mathcal{U}_{\Lambda}\}$ with a profile $\mathcal{L}=(\mathcal{L}_1,\mathcal{L}_2,\ldots,\mathcal{L}_{\Lambda})$, the worst-case rate
	\begin{equation}
	R(M_s,M_p) \Big|_{\textrm{Scheme1}}=\frac{K-t}{t+1}
	\label{eq:rate3}
	\end{equation}
	is achievable if $t \geq \mathcal{L}_1$ and   $M_s \leq \frac{\binom{K-\mathcal{L}_1}{t-\mathcal{L}_1}N}{\binom{K}{t}}$ files, where $t \triangleq K(M_s+M_p)/N$ and $t \in [0,K]$. When $t$ is not an integer, the lower convex envelope of the above points is achievable.
	\label{thm3}
\end{thm}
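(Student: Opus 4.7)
The plan is to realize the Maddah-Ali-Niesen (MAN) rate with effective per-user memory $M_s+M_p$, by exploiting the prior knowledge of the association $\mathcal{U}$ at the placement phase to arrange things so that every user has exactly the same side information it would have under MAN with parameter $t$. First I would split each file $W_n$ into $\binom{K}{t}$ equal subfiles $W_{n,\tau}$ indexed by $t$-subsets $\tau\subseteq[K]$, as in \cite{MaN}.

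For the helper placement, for each $\lambda\in[\Lambda]$ I would select $M_s\binom{K}{t}/N$ subfiles per file, restricted to those $W_{n,\tau}$ with $\mathcal{U}_\lambda\subseteq\tau$, and store them in $\mathcal{Z}_\lambda$. The number of $t$-subsets containing $\mathcal{U}_\lambda$ is $\binom{K-\mathcal{L}_\lambda}{t-\mathcal{L}_\lambda}$, and because $\mathcal{L}_\lambda\leq\mathcal{L}_1$ (with $t\geq\mathcal{L}_1$ making the binomial coefficient well-defined) we have the monotonicity $\binom{K-\mathcal{L}_\lambda}{t-\mathcal{L}_\lambda}\geq\binom{K-\mathcal{L}_1}{t-\mathcal{L}_1}$; the hypothesis $M_s\leq\binom{K-\mathcal{L}_1}{t-\mathcal{L}_1}N/\binom{K}{t}$ then guarantees this selection is feasible for every helper. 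For the private placement, in $Z_k$ I would store precisely those $W_{n,\tau}$ with $k\in\tau$ that are not already in $\mathcal{Z}_{\lambda_k}$. Since every subfile placed in $\mathcal{Z}_{\lambda_k}$ contains $\mathcal{U}_{\lambda_k}\ni k$, all $M_s\binom{K}{t}/N$ of them are relevant to $k$, so $Z_k$ holds $\binom{K-1}{t-1}-M_s\binom{K}{t}/N$ subfiles per file, which totals $N\binom{K-1}{t-1}/\binom{K}{t}-M_s=Nt/K-M_s=M_p$ file units, meeting the memory constraint with equality.

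With this placement every user $k$ has $W_{n,\tau}$ available for every $\tau$ containing $k$, so I would apply the MAN delivery verbatim: for each $(t+1)$-subset $\mathcal{S}\subseteq[K]$ broadcast $\bigoplus_{k\in\mathcal{S}}W_{d_k,\mathcal{S}\setminus k}$. Every $k\in\mathcal{S}$ decodes its desired term because the other $t$ subfiles are indexed by sets that contain $k$ and are therefore available through either its private or helper cache. There are $\binom{K}{t+1}$ transmissions, each of normalized size $1/\binom{K}{t}$, giving the claimed rate $(K-t)/(t+1)$; non-integer $t$ is handled by standard memory-sharing between neighboring integer operating points to produce the lower convex envelope.

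The main obstacle, and precisely the reason for the hypothesis on $M_s$, is the helper memory constraint: we must be able to confine $\mathcal{Z}_\lambda$ to subfiles whose index set contains $\mathcal{U}_\lambda$, so that each such subfile is useful to \emph{every} user attached to $\lambda$ and the private-cache accounting above works out to exactly $M_p$. The binding case is the largest helper, for which the pool of candidate subfiles is smallest, namely $\binom{K-\mathcal{L}_1}{t-\mathcal{L}_1}$; requiring $M_s\binom{K}{t}/N$ not to exceed this pool is exactly the stated upper bound. Once this bottleneck is cleared, the private-cache count and the correctness of the MAN delivery are routine.
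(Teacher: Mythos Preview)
Your proposal is correct and follows essentially the same approach as the paper: split files into $\binom{K}{t}$ MAN subfiles, place in each helper $\lambda$ only subfiles whose index set contains all of $\mathcal{U}_\lambda$, put the remaining $k$-containing subfiles in user $k$'s private cache, and run MAN delivery. Your treatment is in fact slightly more explicit than the paper's in verifying the helper-feasibility via the monotonicity $\binom{K-\mathcal{L}_\lambda}{t-\mathcal{L}_\lambda}\geq\binom{K-\mathcal{L}_1}{t-\mathcal{L}_1}$ and in checking that the private-cache load comes out to exactly $M_p$.
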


\begin{IEEEproof}
	The scheme that achieves the rate in \eqref{eq:rate3} is described in the sequel.
	\subsubsection{Placement phase}
	 The total cache size available to each user is $M=M_s+M_p$. Let $t$ be an integer such that $t=KM/N$ and $t\in [0,K]$. The scheme works only if the following conditions are satisfied: $t \geq \mathcal{L}_1$ and $M_s \leq N\binom{K-\mathcal{L}_1}{t-\mathcal{L}_1}/\binom{K}{t}$ files. If the above conditions hold, each file $W_n$ is divided into $\binom{K}{t}$ subfiles. Each subfile is indexed by a $t-$sized subset $\tau \subseteq [K]$. The helper cache placement is done as follows:
	\begin{equation}
	{	\mathcal{Z}_{\lambda}=\{W_{n,\tau}, \forall n \in [N]: \tau \ni \mathcal{U}_{\lambda}(j), \forall j \in [\mathcal{L}_{\lambda}] \}, \textrm{\hspace{0.1cm}} \forall \lambda \in [\Lambda].}
		\label{eq:place_scheme3}
	\end{equation}
	Similarly, the users' private caches are filled as:
	\begin{equation}
	Z_k = \{W_{n, \tau}, \forall n \in [N]: \tau \ni k\}\backslash \mathcal{Z}_{\lambda}, \textrm{\hspace{0.1cm}} \forall k \in [K]
	\label{eq: place_scheme3user}
	\end{equation}
	where $\lambda$ is the helper cache to which the $k^{th}$ user is connected. 
	 If $M_s \leq N\binom{K-\mathcal{L}_1}{t-\mathcal{L}_1}/\binom{K}{t}$ file units, all the subfiles satisfying the condition in \eqref{eq:place_scheme3} cannot be stored fully in the helper caches due to the memory constraint. Hence, the remaining such subfiles are stored in the corresponding user caches according to \eqref{eq: place_scheme3user}. Thus, $|Z_k|=\big(N\binom{K-1}{t-1}/\binom{K}{t}\big)-|\mathcal{Z}_{\lambda}|=M-M_s=M_p$.

	 \textit{2) Delivery phase:}
	 On receiving a demand vector $\mathbf{d}=(d_1,d_2,\ldots,d_K)$, the server generates $\binom{K}{t+1}$ sets $\mathcal{T}\subseteq [K]$, each of size $|\mathcal{T}|=t+1$. For each $\mathcal{T}$, the server sends a message of the form: 
	 $X_{\mathcal{T}}=\underset{k \in \mathcal{T}}{\bigoplus}W_{d_k,\mathcal{T}\backslash k}$.
	 
	 The above placement and delivery are identical to the MaN scheme in \cite{MaN} for a dedicated cache network with parameters $K$, $N$, and $M=M_s+M_p$. Hence, the decoding directly follows from it.
	 
	\textit{Performance of the scheme:} There are, in total, $\binom{K}{t+1}$ transmissions, each of size $1/\binom{K}{t}$ file units. Thus, the rate achieved by this scheme is $$R(M_s,M_p)\big |_{\textrm{Scheme1}}=\frac{K-t}{t+1}.$$
	This completes the proof of Theorem~\ref{thm3}.
  \end{IEEEproof}	
\textit{Optimality:} The above scheme is optimal under uncoded placement. The optimality follows directly from the fact that the rate achieved in \eqref{eq:rate3} matches exactly with the performance of the optimal scheme for a dedicated cache network \cite{MaN} having a cache size of $M$. Notice that the performance of Scheme $1$ depends only on $M=M_s+M_p$.  Hence, for a fixed $M$, all those $(M_s,M_p)$ pairs satisfying the conditions $M_s+M_p=M$ and $M_s \leq {N\binom{K-\mathcal{L}_1}{t-\mathcal{L}_1}}/{\binom{K}{t}}$ file units achieve the same rate. In Theorem~\ref{thm3}, our main objective is to identify the values of $M_s$ for which the optimal MaN scheme can be employed directly for the network model in Fig.~\ref{fig:setting}.
 
 Now, we consider the case where $t \not\in [0,K]$. When $t=KM/N$ is not an integer, find $M_1={N\floor{t}}/{K}$ and $M_2={N\ceil{t}}/{K}$. When the conditions $M_s \leq N\binom{K-\mathcal{L}_1}{\floor{t}-\mathcal{L}_1}/\binom{K}{\floor{t}}$ and $M_s \leq N\binom{K-\mathcal{L}_1}{\ceil{t}-\mathcal{L}_1}/\binom{K}{\ceil{t}}$ hold, the following memory-sharing technique is used. Let $M_{p_1}$ and $M_{p_2}$ be such that $M_1=M_{s}+M_{p_1}$ and $M_2=M_s+M_{p_2}$. The memory $M$ can be expressed as $M=\alpha M_1+(1-\alpha)M_2$, where $0 \leq \alpha \leq 1 $. Each file $W_n$ is split into $2$ parts, each of size $\alpha$ and $(1-\alpha)$ file units, where $\alpha=1-(t-\floor{t})$.
 Thus, $W_n=\{W^{(\alpha)}_{n},W^{(1-\alpha)}_{n}\} \textrm{\hspace{0.1cm}}, \forall n \in [N]$. The placement corresponding to $\floor{t}$ occupies a total memory of size $\alpha M_1=\alpha(M_s+M_{p_1})$ file units. The file segments $\{W^{(\alpha)}_{n}, \forall n \in [N]\}$ are cached according to $\floor{t}$, and $\alpha R_1$ file units is the transmission length corresponding to it. The remaining memory $(1-\alpha)M_2$ file units is filled according to $\ceil{t}$, and the file segments $\{W^{(1-\alpha)}_{n}, n \in [N]\}$ are  used for it. Corresponding to $\ceil{t}$, the transmission length required is $(1-\alpha)R_2$ file units. Then, the rate $R(M_s,M_p)$ is obtained as
 \begin{align*}
  R(M_s,M_p)&=\alpha R_1 +(1-\alpha)R_2\\ 
 & =\alpha \left(\frac{K-\floor{t}}{\floor{t}+1}\right)+(1-\alpha)\left( \frac{K-\ceil{t}}{\ceil{t}+1} \right).
 \end{align*}
 
   Next, we illustrate the scheme using an example.
   \begin{example}
   $\Lambda=3$, $K=6$, $N=6$, $M_s=6/5$, $M_p=14/5$, $\mathcal{U}=\{\{1,2,3\},\{4,5\},\{6\}\}$ with $\mathcal{L}=(3,2,1)$
   \end{example}
   Consider a case with a server having $N=6$ equal-length files $\mathcal{W}=\{W_1,W_2,\ldots,W_6\}$, connected to $\Lambda=3$ helper caches and to $K=6$ users through an error-free shared link. The helper caches and the user caches are of size $M_s=6/5$ and $M_p=14/5$ file units, respectively. The user-to-helper cache association is $\mathcal{U}=\{\{1,2,3\},\{4,5\},\{6\}\}$.
   
   In the given example, we get $t=4$. Thus, the conditions $t \geq \mathcal{L}_1$ and $M_s \leq N\binom{K-\mathcal{L}_1}{t-\mathcal{L}_1}/\binom{K}{t}=6/5$  are satisfied. Each file, $W_n$, is divided into 15 equally-sized subfiles, and each subfile is indexed using a $4-$sized set $\tau \subseteq [6]$. The contents filled in each helper cache are as follows: $ \mathcal{Z}_1=\{W_{n,1234}, W_{n,1235}, W_{n,1236}, \textrm{\hspace{0.1cm}} \forall n \in [6] \}$, $ \mathcal{Z}_2=\{W_{n,1245}, W_{n,1345}, W_{n,1456}, \textrm{\hspace{0.1cm}} \forall n \in [6]\}$, 
   $\mathcal{Z}_3=\{W_{n,1236}, W_{n,1246}, W_{n,1256}, \textrm{\hspace{0.1cm}} \forall n \in [6]\}$.
   Note that the subfiles $W_{n,2345}, W_{n,2456}, W_{n,3456}$, $\forall n \in [6],$ also satisfy the condition in \eqref{eq: place_scheme3user} for $\lambda=2$. But, due to the memory constraint those subfiles cannot be stored in the second helper cache. A similar scenario exists with the third helper cache as well. The contents stored in the users' private cache are as follows:
$$
   \lambda=1
   \begin{cases}
   Z_1= &\{W_{n,1245}, W_{n,1246}, W_{n,1256}, W_{n,1345},
    \\
    & \textrm{\hspace{1cm}}W_{n,1346}, W_{n,1356}, W_{n,1456},  \textrm{\hspace{0.1cm}} \forall n \in [6] \}, \\
   Z_2= &\{W_{n,1245}, W_{n,1246}, W_{n,1256}, W_{n,2345},
     \\
    & \textrm{\hspace{1cm}} W_{n,2346},W_{n,2356}, W_{n,2456},  \textrm{\hspace{0.1cm}} \forall n \in [6] \}, \\
   Z_3= &\{W_{n,1345}, W_{n,1346}, W_{n,1356}, W_{n,2345},
     \\
    & \textrm{\hspace{1cm}}  W_{n,2346},W_{n,2356}, W_{n,3456},  \textrm{\hspace{0.1cm}} \forall n \in [6] \}, 
   \end{cases}
$$
 $$  \lambda=2
   \begin{cases}
   Z_4= &\{W_{n,1234}, W_{n,1246}, W_{n,1346}, W_{n,2345},
   \\
    & \textrm{\hspace{1cm}} W_{n,2346}, W_{n,2456}, W_{n,3456},  \textrm{\hspace{0.1cm}} \forall n \in [6] \}, \\
   Z_5= &\{W_{n,1235}, W_{n,1256}, W_{n,1356}, W_{n,2345},
    \\
    &\textrm{\hspace{1cm}} W_{n,2356},W_{n,2456}, W_{n,3456},  \textrm{\hspace{0.1cm}} \forall n \in [6] \}, 
   \end{cases}$$
$$   \lambda=3
   \begin{cases}
   Z_6= &\{W_{n,1346}, W_{n,1356}, W_{n,1456}, W_{n,2346},
    \\
   & \textrm{\hspace{1cm}} W_{n,2356},W_{n,2456}, W_{n,3456},  \textrm{\hspace{0.1cm}} \forall n \in [6] \}.
   \end{cases}$$
Let $\mathbf{d}=(1,2,3,4,5,6)$ be the demand vector. There is a transmission corresponding to every $5-$sized subset of $[6]$. The transmissions are as follows:
\begin{small}
\begin{align*}
X_{\{12345\}}&=W_{1,2345} \oplus W_{2,1345} \oplus W_{3,1245} \oplus W_{4,1235} \oplus W_{5,1234}, \\
 X_{\{12346\}} & = W_{1,2346} \oplus W_{2,1346} \oplus W_{3,1246} \oplus W_{4,1236} \oplus W_{6,1234}, \\
 X_{\{12356\}} &= W_{1,2356} \oplus W_{2,1356} \oplus W_{3,1256} \oplus W_{5,1236} \oplus W_{6,1235}, \\
 X_{\{12456\}} &= W_{1,2456} \oplus W_{2,1456} \oplus W_{4,1256} \oplus W_{5,1246}  \oplus W_{6,1245}, \\
 X_{\{13456\}} &= W_{1,3456} \oplus W_{3,1456} \oplus W_{4,1356} \oplus W_{5,1346}  \oplus W_{6,1345}, \\
 X_{\{23456\}} &= W_{2,3456} \oplus W_{3,2456} \oplus W_{4,2356} \oplus W_{5,2346} \oplus W_{6,2345}.
\end{align*}
\end{small}
The decoding is straightforward from the transmissions. Each user is able to recover the desired subfiles from the transmissions using the available cache contents. The rate obtained is $R(6/5,14/5)|_{\textrm{Scheme1}}=6/15=2/5$, which is exactly same as the rate achieved by the Maddah-Ali Niesen scheme for a dedicated cache network with $M=4$ files.

Even though Scheme $1$ achieves the optimal performance under uncoded placement, it is limited only to a specific memory regime. Hence, we need to look for schemes that exists for any $(M_s,M_p)$ pair. In the following subsections, we present such schemes.
\subsection{A Scheme from Theorem~\ref{thm1}: Optimizing the rate using $\mathcal{L}$ }
\label{subsec: profile_opt}
When $\mathcal{U}$ is known at the placement, the user-to-cache association agnostic coded caching scheme proposed in Section~\ref{sec:results} can be used  with a change in the value that  $f^{*}$ assumes. Since $\mathcal{U}$ is known a priori, the optimum $f^{*}$ can be found by minimizing $R(M_s,M_p)$ in \eqref{rate} using the actual profile $\mathcal{L}$ instead of the uniform one. Thus, \eqref{eq:f_value1} can be rewritten as
\begin{align}
 f^{*} =\underset{\frac{M_s}{N}\leq f \leq 1-\frac{M_p}{N}} {\arg\min}\textrm{\hspace{0.2cm}}fR_s(t_s)+(1-f)R_p(t_p),
 \label{eq:fopt_known}
\end{align}
 where $R_s(u) = {\sum_{n=1}^{\Lambda-u}\mathcal{L}_n\binom{\Lambda-n}{u}}/{\binom{\Lambda}{u}}$, $u \in [0, \Lambda]$, and $R_p(v)=  {(K-v)}/{(v+1)}$, $v \in [0,K]$. Since $f^{*}$ is obtained by optimizing \eqref{eq:fopt_known} with the exact profile of the network, the rate $R(M_s,M_p)\big|_{\mathcal{U}\textrm{ }known}$ achieved in this case is always at most the rate in \eqref{rate}. The optimality under this placement follows from the converse in the $\mathcal{U}$ known case.

Consider example~\ref{ex:exmp1} again: $N=4$, $K=4$, $\Lambda =2$, $M_s = 2$, $M_p=0.5$, and $\mathcal{L}=(3,1)$. If the actual profile $\mathcal{L}$ is used to minimize the rate, the value of $f^{*}$ obtained is $1/2$. That is, $|W_n^{(s)}|=1/2$ and $|W_n^{(p)}|=1/2$, $\forall n \in [6]$, and we get $t_s = 2$ and $t_p = 1$. Therefore, the rate $R(M_s,M_p)\big|_{\mathcal{L}=(3,1)}$ is obtained as $R(M_s,M_p)\big|_{\mathcal{L}=(3,1)} = f^{*}R_s(2)+(1-f^{*})R_p(1) = 3/4$, which is less than the rate $R(M_s,M_p)\big|_{\textrm{Theorem\ref{thm1}}}=11/12$ obtained in the $\mathcal{U}$ agnostic scheme.
\begin{rem}
	The knowledge of the association profile $\mathcal{L}$ is enough to find the value of $f^{*}$ that minimizes $R(M_s,M_p)$	in the $\mathcal{U}$ unknown case. That is, the user-to-cache association $\mathcal{U}$ is not required to be known to optimize the rate, instead, the association profile $\mathcal{L}$ is sufficient to find $f^{*}$.
\end{rem}
\subsection{Scheme 2}
\label{subsec:scheme2}
Next, we present a scheme that also exists for all $(M_s,M_p)$ pairs, and is optimal in certain memory regimes.
\begin{thm}
	\label{thm2}
	Consider a broadcast channel with $K$ users, each having a cache of size $M_p$ files, assisted by $\Lambda$ helper caches, each of normalized size $M_s/N $. For a user-to-helper cache association $\mathcal{U}=\{\mathcal{U}_1, \mathcal{U}_2, \ldots, \mathcal{U}_{\Lambda}\}$ with a profile $\mathcal{L}=(\mathcal{L}_1,\mathcal{L}_2,\ldots,\mathcal{L}_{\Lambda})$, the following worst-case rate 
	\begin{small}
	\begin{equation}
	\begin{aligned}
	  R(M_s,M_p)\Big|_{\textrm{Scheme2}} = \frac{\displaystyle\sum_{n=1}^{\Lambda-t_s}\binom{\Lambda-n}{t_s}\left[\binom{\mathcal{L}_1}{t_p+1}-\binom{\mathcal{L}_1-\mathcal{L}_n}{t_p+1}\right]}{\binom{\Lambda}{t_s}\binom{\mathcal{L}_1}{t_p}}
	  \label{eq:rate2}
	  \end{aligned}
	\end{equation}
\end{small}
is achievable, where $t_s \triangleq \frac{\Lambda M_s}{N} \in (0,\Lambda]$ and $t_p \triangleq \frac{\mathcal{L}_1 M_p}{N-M_s} \in [0,\mathcal{L}_1]$. Otherwise, the lower convex envelope of the above points is achievable.
\end{thm}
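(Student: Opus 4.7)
The plan is to prove Theorem~\ref{thm2} by composing two Maddah-Ali--Niesen (MaN) style placements: an outer layer of parameter $t_s$ indexed by $t_s$-subsets of $[\Lambda]$, mirroring the PUE shared-cache scheme on the helper caches, and an inner layer of parameter $t_p$ indexed by $t_p$-subsets of the ``rank positions'' $[\mathcal{L}_1]$ that is aligned, within each group $\mathcal{U}_\lambda$, with the actual users --- an alignment possible only because $\mathcal{U}$ is known at placement. The form of \eqref{eq:rate2}, whose denominator $\binom{\Lambda}{t_s}\binom{\mathcal{L}_1}{t_p}$ matches a doubly-indexed subfile and whose bracketed term counts certain $(t_p+1)$-subsets of $[\mathcal{L}_1]$ with small minimum, is strong evidence of this composition.

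Assuming $t_s,t_p$ are integers (the general case follows by memory sharing), I would split $W_n$ into equal pieces $W_{n,\tau,\rho}$ indexed by pairs $(\tau,\rho)$ with $\tau\subseteq[\Lambda]$, $|\tau|=t_s$, and $\rho\subseteq[\mathcal{L}_1]$, $|\rho|=t_p$. The placement stores $W_{n,\tau,\rho}$ in helper cache $\lambda$ iff $\lambda\in\tau$, and in the private cache of $\mathcal{U}_\lambda(j)$ iff $\lambda\notin\tau$ and $j\in\rho$. A direct count gives $|\mathcal{Z}_\lambda|=N\binom{\Lambda-1}{t_s-1}/\binom{\Lambda}{t_s}=M_s$ and $|Z_{\mathcal{U}_\lambda(j)}|=N(1-M_s/N)\,t_p/\mathcal{L}_1=M_p$, the last equality using the stated definition of $t_p$.

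For the delivery, for each pair $(\mathcal{T},\mathcal{P})$ with $|\mathcal{T}|=t_s+1$ and $|\mathcal{P}|=t_p+1$, the server broadcasts
\[
X_{\mathcal{T},\mathcal{P}}=\bigoplus_{\substack{(\lambda,j)\in\mathcal{T}\times\mathcal{P} \\ \mathcal{L}_\lambda\geq j}} W_{d_{\mathcal{U}_\lambda(j)},\,\mathcal{T}\setminus\{\lambda\},\,\mathcal{P}\setminus\{j\}}.
\]
Decoding for user $\mathcal{U}_\lambda(j)$ follows by a case split on the other summands: a term with $\lambda'\neq\lambda$ has $\lambda\in\mathcal{T}\setminus\{\lambda'\}$ and sits in helper cache $\lambda$, while a term with $\lambda'=\lambda,\,j'\neq j$ has $\lambda\notin\mathcal{T}\setminus\{\lambda\}$ and $j\in\mathcal{P}\setminus\{j'\}$ and sits in the private cache; ``ghost'' positions $(\lambda'',j'')$ with $\mathcal{L}_{\lambda''}<j''$ are simply absent from the XOR and thus contribute no interference.

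The rate follows by counting non-empty transmissions. Since $\mathcal{L}$ is non-increasing, $X_{\mathcal{T},\mathcal{P}}$ is non-empty iff $\mathcal{L}_{\min(\mathcal{T})}\geq\min(\mathcal{P})$; conditioning on $\ell=\min(\mathcal{T})$ gives $\binom{\Lambda-\ell}{t_s}$ choices of $\mathcal{T}$ and $\binom{\mathcal{L}_1}{t_p+1}-\binom{\mathcal{L}_1-\mathcal{L}_\ell}{t_p+1}$ choices of $\mathcal{P}$, so the total number of transmissions equals $\sum_{\ell=1}^{\Lambda-t_s}\binom{\Lambda-\ell}{t_s}\bigl[\binom{\mathcal{L}_1}{t_p+1}-\binom{\mathcal{L}_1-\mathcal{L}_\ell}{t_p+1}\bigr]$. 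Multiplying by the subfile size $1/\bigl(\binom{\Lambda}{t_s}\binom{\mathcal{L}_1}{t_p}\bigr)$ reproduces \eqref{eq:rate2}. The main obstacle I anticipate is the decoding verification under ghost positions: one must carefully argue that the variable-length XORs still leave exactly one unknown subfile for every real user, a property that hinges on the rank-position alignment between the inner MaN placement and the actual user identities --- alignment that is only feasible because the server knows $\mathcal{U}$ at the placement phase.
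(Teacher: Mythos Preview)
Your proposal is correct and follows essentially the same approach as the paper: the same two-level MaN-style placement (outer $t_s$-subsets of $[\Lambda]$ in helper caches, inner $t_p$-subsets of rank positions $[\mathcal{L}_1]$ in private caches), the same XOR-encoded delivery over pairs $(\mathcal{T},\mathcal{S})$ with ghost positions omitted, and the same transmission count. Your explicit non-emptiness criterion $\mathcal{L}_{\min(\mathcal{T})}\geq\min(\mathcal{P})$ is a slightly cleaner way to phrase the counting step than the paper's, but the underlying argument is identical, and the ``obstacle'' you flag about ghost positions is handled exactly as you describe and poses no real difficulty.
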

\begin{IEEEproof}
	The achievability of the rate expression in \eqref{eq:rate2} is described below.
	
	\textit{1) Placement phase:} Let the total memory available to each user be denoted as $M$, where $M=M_s+M_p$.
	We first look at the placement in the helper caches. It is same as the placement in \cite{PUE}.
	 Let $t_s \triangleq {\Lambda M_s}/{N}$ such that $t_s \in [\Lambda]$. Each file $W_n \in \mathcal{W}$ is split into $\binom{\Lambda}{t_s}$ number of subfiles, and each subfile is indexed using a $t_s-$sized set $\tau \subseteq [\Lambda]$. The contents stored in the $\lambda^{th}$ helper cache are given as
	 $\mathcal{Z}_{\lambda} = \{ W_{n,\tau}, \forall n \in [N]: \tau \ni \lambda \}.
	 $
	 
	 The users' caches are filled according to the user-to-helper cache association, $\mathcal{U}$. To populate the user caches, each subfile is further split into $\binom{\mathcal{L}_1}{t_p}$ mini-subfiles, where $t_p \triangleq \frac{\mathcal{L}_1 M_p}{N -M_s}$ and $t_p \in [0,\mathcal{L}_1]$. Thus, each mini-subfile is represented as $W_{n,\tau,\rho}$, where $n \in [N], \tau \subseteq [\Lambda] \textrm{ {such that} } |\tau|=t_s$, and $\rho \subseteq [\mathcal{L}_1]$ such that  $|\rho|=t_p$. We assume that the caches and the users associated with them are ordered. Consider a cache $\lambda$ and a user $\mathcal{U}_{\lambda}(j)$, $j \in [\mathcal{L}_{\lambda}]$, connected to it. Then, the contents stored in the private cache of user $\mathcal{U}_{\lambda}(j)$ are given as:
	 \begin{equation}
	  Z_{\mathcal{U}_{\lambda}(j)} = \{ W_{n,\tau,\rho}, \forall n \in [N]: \tau \not\ni \lambda , \rho \ni j  \}.
	  \label{eq:placem}
	 \end{equation}
	 Each user's private cache stores $N\binom{\Lambda-1}{t_s}\binom{\mathcal{L}_1-1}{t_p-1}$ number of mini-subfiles, each of size $\frac{1}{\binom{\Lambda}{t_s}\binom{\mathcal{L}_1}{t_p}}$ file units, thus complying the memory constraint. According to the designed placement policy,
	 for a user $\mathcal{U}_{\lambda}(j)$, we have $\mathcal{Z}_{\lambda} \cap {Z}_{\mathcal{U}_{\lambda}(j)}=\phi$.

	 \textit{2) Delivery phase:} 
	 Let $\mathbf{d}=(d_1,d_2,\ldots,d_K)$ be the demand vector. We define a set $\mathcal{Q}$ as follows: $\mathcal{Q} \triangleq \{\mathcal{S} \times \mathcal{P}: \forall \textrm{\hspace{0.1cm}}\mathcal{S} \subseteq [\Lambda] \textrm{ and }  \mathcal{P} \subseteq [\mathcal{L}_1] \textrm{ such that } |\mathcal{S}|=t_s+1,  |\mathcal{P}|=t_p+1\}$. Corresponding to every $\mathcal{S}\times \mathcal{P} \in \mathcal{Q}$, we find a set of users $\mathcal{U}_{\mathcal{S} \times \mathcal{P}}$ as:
	  \begin{equation}
	 \mathcal{U}_{\mathcal{S} \times \mathcal{P}}= \bigcup_{\lambda \in \mathcal{S}}\big\{ \mathcal{U}_{\lambda}(j): j \in \mathcal{P} \textrm{ and } j \leq \mathcal{L}_{\lambda} \big\}.
	 \end{equation}
	If $\mathcal{U}_{\mathcal{S} \times \mathcal{P}} \neq \phi $, the server transmits a message corresponding to $\mathcal{S}\times \mathcal{P}$ as follows:
	\begin{equation}
	   X_{\mathcal{S}\times \mathcal{P}}=  \underset{\lambda \in \mathcal{S} }{\bigoplus}\Big(\underset{j \in \mathcal{P}: j \leq \mathcal{L}_{\lambda}}{\oplus}W_{d_{\mathcal{U}_{\lambda}(j)}, \mathcal{S}\backslash \lambda, \mathcal{P}\backslash j}\Big).
	\end{equation}
	If $\mathcal{U}_{\mathcal{S} \times \mathcal{P}}=\phi$, there is no transmission corresponding to $\mathcal{S} \times \mathcal{P}$.
	\par
	\textit{Decoding:} Consider a transmission $X_{\mathcal{S}\times \mathcal{P}}$ and a user $\mathcal{U}_{\lambda}(j)$, where $\lambda \in \mathcal{S}$ and $j \in \mathcal{P}$. At user $\mathcal{U}_{\lambda}(j)$, the received message is of the following form
	\begin{align}
	W_{d_{\mathcal{U}_{\lambda}(j)},\mathcal{S} \backslash \lambda, \mathcal{P} \backslash {j}} \underbrace{\underset{\substack{j^{\prime} \in \mathcal{P}: j^{\prime} \leq \mathcal{L}_{\lambda}\\ j^{\prime} \neq j  }}{\bigoplus} W_{d_{\mathcal{U}_{\lambda}(j^{\prime})},\mathcal{S} \backslash \lambda, \mathcal{P} \backslash j^{\prime}}}_{\text{\textit{term 1}: known from $Z_{\mathcal{U}_{\lambda}(j)}$}}\notag\\ 
	\underbrace{\underset{\lambda^{\prime} \in \mathcal{S}, \lambda \neq \lambda^{\prime} }{\bigoplus} \Big( \underset{\tilde{j} \in \mathcal{P}: \tilde{j} \leq \mathcal{L}_{\lambda^{\prime}}}{\oplus}     W_{d_{\mathcal{U}_{\lambda^{\prime}}(\tilde{j})}, \mathcal{S}\backslash \lambda^{\prime}, \mathcal{P}\backslash \tilde{j}} \Big)}_{\text{\textit{term 2}: known from $\mathcal{Z}_{\lambda}$}}.
	\label{eq:dec2}
	\end{align}
	The user $\mathcal{U}_{\lambda}(j)$ can cancel \textit{term 1} and \textit{term 2} from \eqref{eq:dec2} as the subfiles composing \textit{term 1} and \textit{term 2} are available in its private and shared caches, respectively.
	\par
	\textit{Performance of the scheme}: In the delivery scheme, the server first generates a set $\mathcal{Q}$ which is formed by the cartesian product of all possible $(t_s+1)-$sized subsets of $[\Lambda]$ and $(t_p+1)-$sized subsets of $[\mathcal{L}_1]$. Corresponding to every set $\mathcal{S} \times \mathcal{P} \in \mathcal{Q}$, a set of receiving users $\mathcal{U}_{\mathcal{S} \times \mathcal{P}}$ is found. The number of transmissions is determined by the number of non-empty sets, $\mathcal{U}_{\mathcal{S} \times \mathcal{P}}$. Since the caches are labelled in the non-increasing order of the number of users connected to it, the number of non-empty $\mathcal{U}_{\mathcal{S} \times \mathcal{P}}$ is calculated as follows: each set $\mathcal{S} \times \mathcal{P}$ with $\mathcal{S} \ni \lambda=1$ results in $\binom{\mathcal{L}_1}{t_p+1}$ number of non-empty sets, $\mathcal{U}_{\mathcal{S} \times \mathcal{P}}$. For sets $\mathcal{S} \times \mathcal{P}$ with $\mathcal{S} \not\ni \lambda=1 $, define $\lambda_{\mathcal{S}} \triangleq \min\{\lambda: \lambda \in \mathcal{S}\}$. For each such $\mathcal{S} \times \mathcal{P}$, there are $\binom{\mathcal{L}_1-\mathcal{L}_{\lambda_{\mathcal{S}}}}{t_p+1}$ number of empty sets, $\mathcal{U}_{\mathcal{S}\times \mathcal{P}}$. Thus, the total  number of non-empty $\mathcal{U}_{\mathcal{S} \times \mathcal{P}}$ is given as $\sum_{n=1}^{\Lambda}\binom{\Lambda-n}{t_s}\big[\binom{\mathcal{L}_1}{t_p+1}-\binom{\mathcal{L}_1-\mathcal{L}_n}{t_p+1}\big]$. Since each transmission is of size $1/[\binom{\Lambda}{t_s}\binom{\mathcal{L}_1}{t_p}]$ file units, we obtain $R(M_s,M_p)|_{\textrm{Scheme2}}$ as
	\begin{equation*}
	 \small{ \frac{\displaystyle\sum_{n=1}^{\Lambda-t_s}\binom{\Lambda-n}{t_s}\left[\binom{\mathcal{L}_1}{t_p+1}-\binom{\mathcal{L}_1-\mathcal{L}_n}{t_p+1}\right]}{\binom{\Lambda}{t_s}\binom{\mathcal{L}_1}{t_p}}.}
	\end{equation*}
	This completes the proof of Theorem~\ref{thm2}.
\end{IEEEproof}
When $t_s=0$, the setting is equivalent to a dedicated cache network model. Therefore, we can follow the MaN scheme in \cite{MaN} for such cases.

\begin{corollary}
	\label{cor3}
	When the association of users to helper caches is uniform, the rate achieved is 
\begin{equation}
 R(M_s,M_p)\big|_{\textrm{Scheme2}}=\frac{K\big(1-M/{N}\big)}{(t_s+1)(t_p+1)},
\end{equation}
where $M=M_s+M_p$.
\end{corollary}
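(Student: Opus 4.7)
The plan is a direct substitution followed by two standard binomial identities, since under a uniform profile every $\mathcal{L}_n$ equals $\mathcal{L}_1 = K/\Lambda$. First I would plug $\mathcal{L}_n = \mathcal{L}_1$ into the numerator of \eqref{eq:rate2}. Because $\mathcal{L}_1 - \mathcal{L}_n = 0$ and $t_p + 1 \geq 1$, the term $\binom{\mathcal{L}_1 - \mathcal{L}_n}{t_p+1}$ vanishes for every $n$, so the bracket collapses to the constant $\binom{\mathcal{L}_1}{t_p+1}$, which pulls out of the sum.

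Next I would apply the hockey stick identity $\sum_{n=1}^{\Lambda - t_s}\binom{\Lambda - n}{t_s} = \binom{\Lambda}{t_s+1}$, exactly as used in the proof of the earlier corollary for Theorem~\ref{thm1}. After this step the rate becomes
\begin{equation*}
R(M_s,M_p)\big|_{\textrm{Scheme2}} = \frac{\binom{\Lambda}{t_s+1}\binom{\mathcal{L}_1}{t_p+1}}{\binom{\Lambda}{t_s}\binom{\mathcal{L}_1}{t_p}} = \frac{\Lambda - t_s}{t_s+1}\cdot\frac{\mathcal{L}_1 - t_p}{t_p+1},
\end{equation*}
using the elementary identity $\binom{n}{k+1}/\binom{n}{k} = (n-k)/(k+1)$ twice.

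Finally I would substitute $t_s = \Lambda M_s / N$, giving $\Lambda - t_s = \Lambda(1 - M_s/N)$, and $t_p = \mathcal{L}_1 M_p / [N(1 - M_s/N)]$, giving $\mathcal{L}_1 - t_p = \mathcal{L}_1\,\frac{1 - (M_s+M_p)/N}{1 - M_s/N}$. Multiplying these and using $\mathcal{L}_1 = K/\Lambda$, the factor $\Lambda(1 - M_s/N)$ cancels the $\Lambda$ in the denominator of $\mathcal{L}_1$ and the $(1 - M_s/N)$ in the denominator of $\mathcal{L}_1 - t_p$, leaving $K\bigl(1 - (M_s+M_p)/N\bigr)$ in the numerator and $(t_s+1)(t_p+1)$ in the denominator, which is the claimed expression. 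There is no real obstacle here; the only point requiring minor care is verifying the cancellation of $(1 - M_s/N)$ between the two factors so that the final answer depends only on the combined memory $M_s + M_p$ in the numerator, which matches the intuition that a uniform profile behaves like a dedicated cache setting with aggregated memory.
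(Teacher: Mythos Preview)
Your proposal is correct and follows essentially the same route as the paper's proof: both observe that $\binom{\mathcal{L}_1-\mathcal{L}_n}{t_p+1}=\binom{0}{t_p+1}=0$ in the uniform case, apply the hockey stick identity to collapse the sum to $\binom{\Lambda}{t_s+1}$, reduce the binomial ratios to $\frac{(\Lambda-t_s)(\mathcal{L}_1-t_p)}{(t_s+1)(t_p+1)}$, and then substitute for $t_s$, $t_p$, and $\mathcal{L}_1=K/\Lambda$. Your write-up is in fact more explicit than the paper's about the final algebraic cancellation of the factor $(1-M_s/N)$, which the paper leaves as ``further simplifying.''
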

\begin{IEEEproof}
	For a uniform user-to-helper cache association, the profile is $\mathcal{L}=(\frac{K}{\Lambda},\ldots,\frac{K}{\Lambda})$. Then, the rate expression in \eqref{eq:rate2} becomes
	\begin{align}
	 R(M_s,M_p)\big|_{\textrm{Scheme2}}&= \frac{\sum_{n=1}^{\Lambda-t_s}\binom{\Lambda-n}{t_s}\binom{\mathcal{L}_1}{t_p+1}}{\binom{\Lambda}{t_s}\binom{\mathcal{L}_1}{t_p}}\notag\\
	& = \frac{\binom{\Lambda}{t_s+1}\binom{\mathcal{L}_1}{t_p+1}}{\binom{\Lambda}{t_s}\binom{\mathcal{L}_1}{t_p+1}}=\frac{(\Lambda-t_s)(\mathcal{L}_1-t_p)}{(t_s+1)(t_p+1)}. \label{eq:uniform}
	\end{align}
Substituting for $\mathcal{L}_1$ and further simplifying \eqref{eq:uniform}, we get
\begin{equation*}
R(M_s,M_p)\Big|_{\textrm{Scheme2}}=\frac{K \big( 1-M/N \big)}{(t_s+1)(t_p+1)}.
\end{equation*}
This completes the proof of Corollary~\ref{cor3}. 
\end{IEEEproof}		

\begin{rem}
	For a uniform user-to-helper cache association, the coding gain, which is defined as the number of users benefiting from a transmission, is $(t_s+1)(t_p+1)$ for all transmissions.	
\end{rem}

When $t_s \not\in (0,\Lambda]$ or $t_p \not\in [0,\mathcal{L}_1]$, we need to employ the memory-sharing technique described below. 

	 1) First, consider a case when $t_s=\Lambda M_s/N \in (0,\Lambda]$ and $t_p ={\mathcal{L}_1M_p}/{(N-M_s)} \not\in [0,\mathcal{L}_1]$. The helper cache placement is same as described in the scheme. Obtain $M_{p_1}=(N-M_s)\floor{t_p}/\mathcal{L}_1$ and $M_{p_2}=(N-M_s)\ceil{t_p}/\mathcal{L}_1$. Then, $M_p=\alpha M_{p_1}+(1-\alpha)M_{p_2}$, where $0 \leq \alpha \leq 1$. Each subfile in the set $\{W_{n,\tau}, \forall n \in [N]: \tau \subseteq [\Lambda], |\tau|=t_s\}$ is segmented into two parts: $W^{(\alpha)}_{n,\tau}$ and $W^{(1-\alpha)}_{n,\tau}$, each of size $\alpha/\binom{\Lambda}{t_s}$ and $(1-\alpha)/\binom{\Lambda}{t_s}$ file units, respectively. 
	  The mini-subfiles of $\{W^{(\alpha)}_{n,\tau}, \forall n \in [N]: \tau  \subseteq [\Lambda], |\tau|=t_s\}$ are cached according to \eqref{eq:placem} with $|\rho|=\floor{t_p}$, occupying $\alpha M_{p_1}$ portion of each user's private cache. The transmission length required with $t_s$ and $\floor{t_p}$ is $\alpha R_1$ file units. Likewise, the placement of the mini-subfiles of $\{W^{(1-\alpha)}_{n,\tau}, \forall n \in [N]: \tau  \subseteq [\Lambda], |\tau|=t_s\}$ is done with $|\rho|=\ceil{t_p}$, and  $(1-\alpha)R_2$ file units is the transmission length with $t_s$ and $\ceil{t_p}$. Then, the rate $R(M_s,M_p)$ is obtained as $R(M_s,M_p)=\alpha R_1(M_s,M_{p_1})+(1-\alpha)R_2(M_s,M_{p_2})$.
	
	2) Next, assume $t_s = \Lambda M_s/N \not \in (0,\Lambda]$. In this case, it is not required to check whether $t_p$ is an integer or not. Find $M_{s_1}=N\floor{t_s}/\Lambda$ and $M_{s_2}=N\ceil{t_s}/\Lambda$. Then, $M_s =\alpha M_{s_1}+(1-\alpha)M_{s_2}$, where $0 \leq \alpha \leq 1$. Divide the entire file library into two parts: $\{W^{(\alpha)}_{n}, \forall n \in [N]\}$ and $\{W^{(1-\alpha)}_{n}, \forall n \in [N]\}$ such that $|W^{(\alpha)}_{n}|=\alpha$ file units and $|W^{(1-\alpha)}_{n}|=(1-\alpha)$ file units. Using $M_{s_1}$, obtain $t_{p_1}=\mathcal{L}_1 M_p/(N-M_{s_1})$. If $t_{p_1}$ is an integer, the placement and delivery are done according to Scheme $2$ with $\floor{t_s}$ and $t_{p_1}$. The helper cache placement with $\floor{t_s}$ is performed using the set $\{W^{(\alpha)}_{n}, \forall n \in [N]\}$, and it occupies only $\alpha M_{s_1}$ portion of the helper caches. If $t_{p_1} \not\in [0,\mathcal{L}_1]$, we need to follow the procedure mentioned in the first case by finding $M_{p_{1,1}}=(N-M_{s_1})\floor{t_{p_1}}/\mathcal{L}_1$ and $M_{p_{1,2}}=(N-M_{s_1})\ceil{t_{p_1}}/\mathcal{L}_1$. Then, $M_{p}=\beta M_{p_{1,1}}+(1-\beta)M_{p_{1,2}}$, where $0 \leq \beta \leq 1$. Each subfile in the set $\{W^{(\alpha)}_{n,\tau}, \forall n \in [N]: \tau \subseteq [\Lambda], |\tau|=\floor{t_{s}}\}$ is divided into two parts: $W^{(\alpha\beta)}_{n,\tau}$ and $W^{(\alpha-\alpha\beta)}_{n,\tau}$, each with size $\alpha\beta/\binom{\Lambda}{\floor{t_{s}}}$ file units and $\alpha(1-\beta)/\binom{\Lambda}{\floor{t_{s}}}$ file units, respectively. The portion of each user's private cache occupied by the mini-subfiles of $\{W^{(\alpha\beta)}_{n,\tau}, \forall n \in [N]: \tau \subseteq [\Lambda], |\tau|=\floor{t_{s}}\}$ having $|\rho|=\floor{t_{p_1}}$ is $\alpha \beta M_{p_{1,1}}$. The transmission length obtained in this case is $\alpha \beta R_{1,1}$ file units. Similarly, $\alpha(1-\beta)M_{p_{1,2}}$ is the portion of the memory occupied by the mini-subfiles of the set $\{W^{(\alpha-\alpha\beta)}_{n,\tau}, \forall n \in [N]: \tau \subseteq [\Lambda], |\tau|=\floor{t_{s}}\}$ with $|\rho|=\ceil{t_{p_1}}$, and $\alpha(1-\beta)R_{1,2} $ file units is the corresponding transmission length required. 
	 
	 The remaining $(1-\alpha)M_{s_2}$ portion of each helper cache is filled with the subfiles of the set $\{W^{(1-\alpha)}_{n}, \forall n \in [N]\}$ using the value $\ceil{t_s}$. Then, find $t_{p_2}=\mathcal{L}_1M_p/(N-M_{s_2})$. If $t_{p_2}$ is an integer, the placement and delivery are done using Scheme $2$ with $\ceil{t_s}$ and $t_{p_2}$. If $t_{p_2}$ is not an integer, find $M_{p_{2,1}}=(N-M_{s_2})\floor{t_{p_2}}/\mathcal{L}_1$ and $M_{p_{2,2}}=(N-M_{s_2})\ceil{t_{p_2}}/\mathcal{L}_1$. Then, $M_p=\gamma M_{p_{2,1}}+(1-\gamma)M_{p_{2,2}}$ such that $0 \leq \gamma \leq 1$. Each subfile in the set $\{W^{(1-\alpha)}_{n,\tau}, \forall n \in [N]: \tau \subseteq [\Lambda], |\tau|=\ceil{t_s}\}$ is partitioned into $W^{(\gamma -\gamma\alpha)}_{n,\tau}$ and $W^{((1 -\alpha)(1-\gamma))}_{n,\tau}$, where $|W^{(\gamma -\gamma\alpha)}_{n,\tau}|=(1-\alpha)\gamma/\binom{\Lambda}{\ceil{t_s}}$ file units and $|W^{((1 -\alpha)(1-\gamma))}_{n,\tau}|=(1-\alpha)(1-\gamma)/\binom{\Lambda}{\ceil{t_s}}$ file units. The remaining $(1-\alpha)M_p$ portion of each user's private caches is filled according to \eqref{eq:placem} with the mini-subfiles of the following sets: $\{W^{(\gamma-\gamma\alpha)}_{n,\tau}, \forall n \in [N]: \tau \subseteq [\Lambda], |\tau|=\ceil{t_{s}}\}$ with $|\rho|=\floor{t_{p_2}}$ and $\{W^{((1-\alpha)(1-\gamma))}_{n,\tau}, \forall n \in [N]: \tau \subseteq [\Lambda], |\tau|=\ceil{t_{s}}\}$ with $|\rho|=\ceil{t_{p_2}}$. The mini-subfiles having $|\rho|=\floor{t_{p_2}}$ occupies a memory of $(1-\alpha)\gamma M_{p_{2,1}}$, and the remaining $(1-\alpha)(1-\gamma)M_{p_{2,2}}$ portion is occupied by the mini-subfies having $|\rho|=\ceil{t_{p_2}}$. The transmission length required obtained in the case of $\ceil{t_s}$ and $\floor{t_{p_2}}$ is $(1-\alpha)\gamma R_{2,1}$ file units, and $(1-\alpha)(1-\gamma)R_{2,2}$ file units is the transmission length required for $\ceil{t_s}$ and $\ceil{t_{p_2}}$.
	 
	 Thus, the given memory pair $(M_s,M_p)$ is obtained as the convex linear combination of four other points $(M_{s_1},M_{p_{1,1}})$, $(M_{s_1},M_{p_{1,2}})$,  $(M_{s_2},M_{p_{2,1}})$, and $(M_{s_2},M_{p_{2,2}})$ as:
	 \begin{align*}
	(M_s,&M_p)= \alpha\beta(M_{s_1},M_{p_{1,1}})+\alpha(1-\beta)(M_{s_1},M_{p_{1,2}})+\\&(1-\alpha)\gamma(M_{s_2},M_{p_{2,1}})+(1-\alpha)(1-\gamma)(M_{s_2},M_{p_{2,2}}).
	 \end{align*}
	 Therefore, the rate $R(M_s,M_p)$ is obtained as: 
	 \begin{small}
	 \begin{align*} 
	  R(M_s,M_p)=\alpha\beta R_{1,1}(M_{s_1},M_{p_{1,1}})+ \alpha(1-\beta)R_{1,2}(M_{s_1},M_{p_{1,2}})+\\(1-\alpha)\gamma R_{2,1}(M_{s_2},M_{p_{2,1}})+ (1-\alpha)(1-\gamma)R_{2,2}(M_{s_2},M_{p_{2,2}}).
	  \end{align*}  
  \end{small}
Next, we present an example to illustrate the scheme.
\begin{example}
  $\Lambda=3$, $K=6$, $N=6$, $M_s=2$, $M_p=4/3$, $\mathcal{U}=\{\{1,2,3\},\{4,5\},\{6\}\}$ with $\mathcal{L}=(3,2,1)$	
  \label{ex:ex1_scheme2}
\end{example}
Consider a scenario where there is a server with $N=6$ files, $\mathcal{W}=\{W_1,W_2,\ldots,W_6\}$, connected to $\Lambda=3$ helper caches and to $K=6$ users through a wireless broadcast link. Each helper cache and user cache are of size $M_s=2$ files and $M_p=4/3$ files, respectively.

In this example, $t_s=1$ and $t_p=1$. To fill the helper caches, each file $W_n, n \in [6]$, is divided into $3$ subfiles $\{W_{n,1},W_{n,2},W_{n,3}\}$. The contents cached at each helper cache are as follows:
\begin{align*}
\mathcal{Z}_1=\{W_{n,1}, \forall n \in [6]\},& \mathcal{Z}_2=\{W_{n,2}, \forall n \in [6]\},\\ \textrm{ and }
\mathcal{Z}_3=&\{W_{n,3}, \forall n \in [6]\}. 
\end{align*}

To fill the user caches, each subfile is further divided into $3$ mini-subfiles $\{W_{n,\tau,1}, W_{n,\tau,2}, W_{n,\tau,3}\}$, where $n \in [6]$ and $\tau \in [3]$. The contents stored in each user cache are given below.
$$
\lambda=1
\begin{cases}
Z_1=\{W_{n,2,1}, W_{n,3,1}, \textrm{\hspace{0.1cm}} \forall n \in [6]\},\textrm{\hspace{0.1cm}}\\
Z_2=\{W_{n,2,2}, W_{n,3,2}, \textrm{\hspace{0.1cm}}\forall n \in [6]\},\\
Z_3=\{W_{n,2,3}, W_{n,3,3}, \textrm{\hspace{0.1cm}}\forall n \in [6]\}, 
\end{cases}
$$
$$
\lambda=2
\begin{cases}
Z_4=\{W_{n,1,1,}, W_{n,3,1}, \textrm{\hspace{0.1cm}}\forall n \in [6]\},\textrm{\hspace{0.1cm}}\\
Z_5=\{W_{n,1,2}, W_{n,3,2}, \textrm{\hspace{0.1cm}}\forall n \in [6]\}, 
\end{cases}
$$
$$
\lambda=3
\begin{cases}
Z_6=\{W_{n,1,1},W_{n,2,1}, \textrm{\hspace{0.1cm}}\forall n \in [6]\}.
\end{cases}
$$

Let $\mathbf{d}=(1,2,3,4,5,6)$ be the demand vector. The set $\mathcal{Q}$ is obtained as follows: $\mathcal{Q}=\big\{\{12,12\}, \{12,13\}, \{12,23\},$ $\{13,12\}, \{13,13\}, \{13,23\},\{23,12\}, \{23,13\},\{23,23\}\big\}$. Then, the set of users $\mathcal{U}_{\mathcal{S} \times \mathcal{P}}$, $\forall \mathcal{S} \times \mathcal{P} \in \mathcal{Q}$, is obtained as follows:
$\mathcal{U}_{\{12,12\}}=\{1,2,4,5\},\textrm{\hspace{0.1cm}} \mathcal{U}_{\{12,13\}}=\{1,3,4\}, \textrm{\hspace{0.1cm}}
\mathcal{U}_{\{12,23\}}=\{2,3,5\}, \textrm{\hspace{0.1cm}} \mathcal{U}_{\{13,12\}}=\{1,2,6\}, \textrm{\hspace{0.1cm}}
 \mathcal{U}_{\{13,13\}}=\{1,3,6\}, \textrm{\hspace{0.1cm}} \mathcal{U}_{\{13,23\}}=\{2,3\}, \textrm{\hspace{0.1cm}}$ 
  $\mathcal{U}_{\{23,12\}}=\{4,5,6\}, \textrm{\hspace{0.1cm}} \mathcal{U}_{\{23,13\}}=\{4,6\}, \textrm{\hspace{0.1cm}} \mathcal{U}_{\{23,23\}}=\{5\}.$
Since $\mathcal{U}_{\mathcal{S} \times \mathcal{P}}\neq \phi$, $\forall \mathcal{S} \times \mathcal{P} \in \mathcal{Q}$, the server sends a message corresponding to every $\mathcal{S} \times \mathcal{P}$. Note that the number of users benefiting from each transmitted message is not the same. The transmissions are as follows:
\begin{align*}
X_{\{12,12\}}&=W_{1,2,2} \oplus  W_{2,2,1} \oplus W_{4,1,2} \oplus W_{5,1,1},\textrm{\hspace{0.1cm}}\\
X_{\{12,13\}}&=W_{1,2,3} \oplus W_{3,2,1} \oplus W_{4,1,3},\\
X_{\{12,23\}}&=W_{2,2,3} \oplus W_{3,2,2} \oplus W_{5,1,3},\textrm{\hspace{0.1cm}}\\
X_{\{13,12\}}&=W_{1,3,2} \oplus W_{2,3,1} \oplus W_{6,1,2}, \\
X_{\{13,13\}}&=W_{1,3,3} \oplus W_{3,3,1}   \oplus W_{6,1,3}, \textrm{\hspace{0.1cm}}\\
X_{\{13,23\}}&=W_{2,3,3} \oplus W_{3,3,2}, \\
X_{\{23,12\}}&=W_{4,3,2} \oplus W_{5,3,1} \oplus W_{6,2,2},\textrm{\hspace{0.1cm}}\\
X_{\{23,13\}}&=W_{4,3,3} \oplus W_{6,2,3},\textrm{\hspace{0.1cm}}
X_{\{23,23\}}=W_{5,3,3}.
\end{align*}

To explain the decoding, consider user $1$. The transmissions beneficial to user $1$ are $X_{\{12,12\}}$, $X_{\{12,13\}}$, $X_{\{13,12\}}$, and $X_{\{13,13\}}$. Consider transmission $X_{\{12,12\}}$. The subfile $W_{2,2,1}$ is cached in user $1$'s private cache, and the other two subfiles, $W_{\{4,1,2\}}$ and $W_{\{5,1,1\}}$, are available from the helper cache to which it is connected. Thus, user $1$ can decode the desired subfile $W_{1,2,2}$. Similarly, it can decode the remaining subfiles. The procedure is same for all other users as well.
Thus, the rate achieved is $R(2,4/3)|_{\textrm{Scheme2}}=9/9=1$.

 Now, let us look at the rate achieved by the scheme in Section~\ref{subsec: profile_opt} in this case. With $\mathcal{U}$ known, we get $R(2,4/3)\big|_{\mathcal{L}=(3,1)}=0.89$ which is less than the rate achieved by Scheme $2$. However, there are instances where Scheme $2$ exhibits a better performance than the scheme in Section~\ref{subsec: profile_opt}. One such instance is described below.
 
\begin{example}
 	$N=4$, $K=4$, $\Lambda=2$, $M_s=2$, $M_p=0.5$, $\mathcal{U}=\{\{1,2\},\{3,4\}\}$ with $\mathcal{L}=(2,2)$
 	\label{ex:ex2_scheme2}
\end{example}
 
 Consider a network with a server having $N= 4$ unit-sized files, $\mathcal{W}=\{W_1,W_2,W_3,W_4\}$. The server is connected to $\Lambda = 2$ helper caches, each of size $M_s = 2$ units, and to $K=4$ users, each having a dedicated cache of size $M_p=0.5$ units. The users are associated with the caches in a uniform manner as given: $\mathcal{U}=\{\{1,2\},\{3,4\}\}$.

In Scheme $2$, we get $t_s=1$ and $t_p=1/2$. To fill the helper caches, each file $W_n, n \in [4]$,l is divided into equally-sized subfiles: $W_n =\{W_{n,1},W_{n,2}\}$. The helper cache placement is as follows: $\mathcal{Z}_1 = \{W_{n,1},\forall n \in [4]\}$ and $\mathcal{Z}_2 = \{W_{n,2},\forall n \in [4]\}$. Since $t_p \not\in [2]$, the memory sharing technique needs to be employed between $\floor{t_p}$ and $\ceil{t_p}$ instances. Corresponding to $\floor{t_p}$ and $\ceil{t_p}$, we obtain $M_{p_1} =  0$ and $M_{p_2} = 1$. Thus, $M_p = \alpha M_{p_1} + (1-\alpha)M_{p_2}$, where $\alpha = 1/2$. Each subfile $W_{n,\tau}$, $n \in [4]$, $\tau \in [2]$, is further divided into two parts as $W_{n,\tau} = \{W_{n,\tau}^{(\alpha)},W_{n,\tau}^{(1-\alpha)}\}$, where $|W_{n,\tau}^{(\alpha)}|=1/4$ units and $|W_{n,\tau}^{(1-\alpha)}|=1/4$ units. 

 The case $\floor{t_p}=0$ corresponds to a fully shared cache system, and hence, Scheme $2$ reduces to the optimal shared cache scheme in \cite{PUE}. Corresponding to $\ceil{t_p}=1$, each subfile in the set $\{W_{n,\tau}^{(1-\alpha)}, \forall \tau \in [2], n \in [4]\}$ is further divided into two mini-subfiles  $W_{n,\tau,1}^{(1-\alpha)}$ and $W_{n,\tau,2}^{(1-\alpha)}$, each of size $1/8$ units.
 Thus, the contents stored at the users' caches are as follows:
 \begin{align*}
 Z_1 & = \{W^{(1-\alpha)}_{n,2,1} \forall n \in [4]\},\textrm{\hspace{0.3cm}} Z_2 = \{W^{(1-\alpha)}_{n,2,2} \forall n \in [4]\},\\
 Z_3 & = \{W^{(1-\alpha)}_{n,1,1} \forall n \in [4]\}, \textrm{\hspace{0.3cm}} Z_4 = \{W^{(1-\alpha)}_{n,1,2} \forall n \in [4]\}.
 \end{align*}
 
 Let $\mathbf{d}=(1,2,3,4)$ be the demand vector. Then, the transmissions are as follows:
 \begin{align*}
   X_{\{12\},1}^{(\alpha)} &= W^{(\alpha)}_{1,2} \oplus  W^{(\alpha)}_{3,1}, \textrm{\hspace{0.2cm}} X_{\{12\},2}^{(\alpha)} = W^{(\alpha)}_{2,2} \oplus  W^{(\alpha)}_{4,1},\\
   X_{\{12,12\}}^{(1-\alpha)} & = W^{(1-\alpha)}_{1,2,2} \oplus W^{(1-\alpha)}_{2,2,1} \oplus W^{(1-\alpha)}_{3,1,2} \oplus W^{(1-\alpha)}_{4,1,1}.
 \end{align*}
 Each user is able to retrieve its demanded file using the above transmissions and the available cache contents. Thus, the rate obtained in this case is $R(2,0.5)\big|_{\textrm{Scheme2}} = 1/2 + 1/8 = 5/8$. Whereas, the rate achieved by the scheme in Section~\ref{subsec: profile_opt} in this case is $R(2,0.5)\big|_{\mathcal{L}=(2,2)} = 2/3$ (the value of $f^{*}$ obtained is $3/4$) which is greater than the rate achieved by  Scheme $2$. Later in Section~\ref{sec:lb}, we show that Scheme $2$ is optimal in certain memory regimes.

 From the previous two examples (Examples~\ref{ex:ex1_scheme2} and~\ref{ex:ex2_scheme2}), we have seen that neither Scheme $2$ nor the scheme in Section~\ref{subsec: profile_opt} performs well in all memory regimes. Hence, we propose another scheme in the following subsection that leverages the advantages provided by both the schemes and result in a better performance when $\mathcal{U}$ is known beforehand.
 
 \subsection{Composite scheme: Combination of MaN scheme and Scheme $2$}
\label{subsec:MAN_scheme2}
We have seen, so far, three different schemes for the case when $\mathcal{U}$ is known at the placement phase itself. Except Scheme $1$, the other two schemes exist in all memory regimes and each of them offers advantage in different memory regions. Hence, we design a new scheme that encompasses the advantages offered by the previous two schemes (Scheme in Section~\ref{subsec: profile_opt} and Scheme $2$) into a single one. We refer this scheme as composite scheme as it combines MaN scheme  and Scheme 2. The composite scheme is based on partitioning the files and the users' private memory appropriately to share them between MaN scheme and Scheme $2$.

 The scheme is as follows: divide each file $W_n,  n \in [N]$, into two non-overlapping parts: $W_n = \{W_n^{(M)},W_n^{(S2)}\} $ such that $|W_n^{(M)}|=z$ file units and $|W_n^{(S2)}|=1-z$ file units, where $0 \leq z \leq 1$. The set $\{W_n^{(M)}, \forall n \in [N]\}$ is used to fill $vM_p$ portion of the user's private cache using  MaN  scheme placement, where $0 \leq v \leq 1$. The helper caches and the remaining $(1-v)M_p$ portion of each user's private cache are filled according to Scheme $2$ using the set $\{W_n^{(S2)}, \forall n \in [N]\}$. On receiving a demand vector $\mathbf{d}$, the server independently runs the two delivery policies corresponding to MaN scheme and Scheme $2$ on $\{W_n^{(M)}, \forall n \in [N]\}$ and $\{W_n^{(S2)}, \forall n \in [N]\}$, respectively. We define, $t_p^{(M)} \triangleq \frac{KvM_p}{zN} \in [0,K]$, $t_s^{(S2)} \triangleq \frac{\Lambda M_s}{(1-z)N}\in (0,\Lambda]$, and $t_p^{(S2)} \triangleq \frac{\mathcal{L}_1(1-v) M_p}{(1-z)N-M_s} \in [0,\mathcal{L}_1]$. When $t_p^{(M)}$, $t_s^{(S2)}$ and $t_p^{(S2)}$ are integers, the rate achieved by the composite scheme is:
 \begin{align*}
   R&(M_s,M_p)\big|_{\substack{\textrm{Composite} \\ \textrm{scheme}}} =z\left(\frac{K-t_p^{(M)}}{t_p^{(M)}+1}\right)+\\&(1-z)\left( \frac{\displaystyle\sum_{n=1}^{\Lambda-t_s}\binom{\Lambda-n}{t_s^{(S2)}}\left[\binom{\mathcal{L}_1}{t_p^{(S2)}+1}-\binom{\mathcal{L}_1-\mathcal{L}_n}{t_p^{(S2)}+1}\right]}{\binom{\Lambda}{t_s^{(S2)}}\binom{\mathcal{L}_1}{t_p^{(S2)}}}\right).
 \end{align*}
 If any of the parameters $t_p^{(M)}$, $t_s^{(S2)}$, and $t_p^{(S2)}$ are not integers, we need to employ the memory-sharing technique. The parameters $z$ and $v$ are obtained by optimizing $R(M_s,M_p)\big|_{\substack{\textrm{Composite} \\ \textrm{scheme}}}$ over $0 \leq v \leq 1$ and $\frac{vM_p}{N} \leq z \leq 1-\frac{M_s}{N}$. The optimal $z^{*}$ and $v^{*}$ obtained through the optimization is then used in the scheme.
\begin{rem}
 When $v=1$, the composite scheme subsumes the scheme in Section~\ref{subsec: profile_opt} as a special case.
\end{rem}
Note that the rate achieved by the composite scheme will always be at most the rate achieved   either by Scheme $2$ or by scheme in Section~\ref{subsec: profile_opt}. That is, $R(M_s,M_p)\big|_{\substack{\textrm{Composite} \\ \textrm{scheme}}} \leq \min\{R(M_s,M_p)\big|_{\mathcal{L}},R(M_s,M_p)\big|_{\textrm{Scheme2}}\}$, where $R(M_s,M_p)\big|_{\mathcal{L}}$ denotes the rate-memory tradeoff of the scheme in  Section~\ref{subsec: profile_opt}. Hence, it is enough to consider only the composite scheme while doing a performance comparison.

\section{Numerical Comparisons}
\label{sec:analyses}
In this section, we compare and characterize the performances of the proposed schemes in Theorem~\ref{thm1} ($\mathcal{U}$ unknown case), Theorem~\ref{thm3} (Scheme $1$), and in Section~\ref{subsec:MAN_scheme2} (composite scheme). We consider a scenario with $N=30$ files, $K=30$ users, and $\Lambda=5$ helper caches. In Fig.~\ref{fig:skewed}, the plots are drawn with a $\mathcal{U}=\{\{1,2,\ldots,20\},\{21,\ldots,24\},$ $\{25,26,27\},\{28,29\},\{30\}\}$ having the profile $\mathcal{L}=(20,4,3,2,1)$. The plots in Fig.~\ref{fig:skewed} are drawn by fixing $M_s$, and varying $M_p$ such that $M_s+M_p \leq N$. The rate-memory curves $R^{*}_{\textrm{PUE}}(M)$ and $R^{*}_{\textrm{MaN}}(M)$, where $M=M_s+M_p$, serve as an upper bound and a lower bound, respectively, for the proposed schemes.

\begin{figure*}
	\centering
	\begin{subfigure}{0.45\textwidth}
		\includegraphics[width=\textwidth]{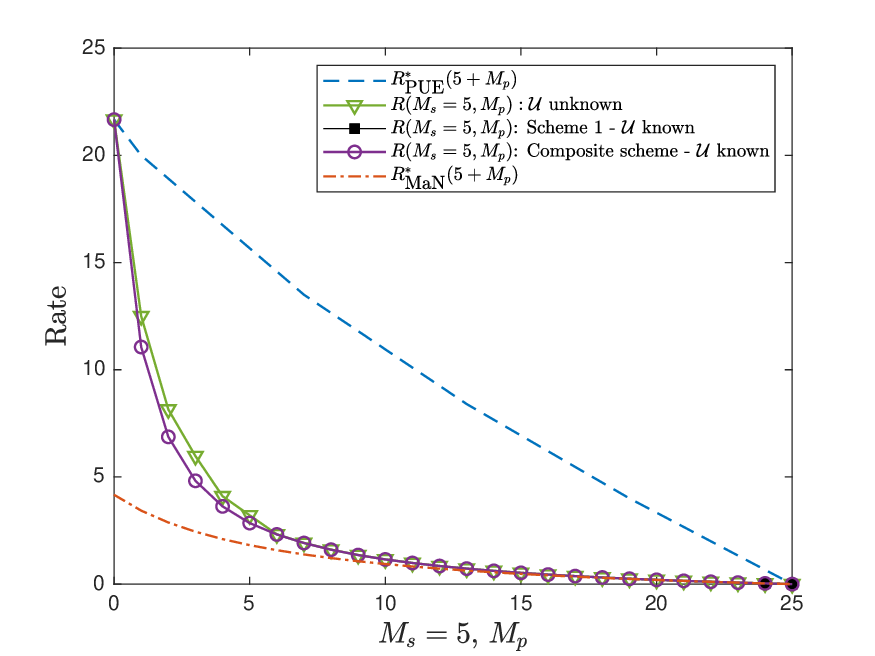}
		\caption{$M_s=5$}	
		\label{fig:ts1skewed}
	\end{subfigure}
	\hfill
	\begin{subfigure}{0.45\textwidth}
		\includegraphics[width=\textwidth]{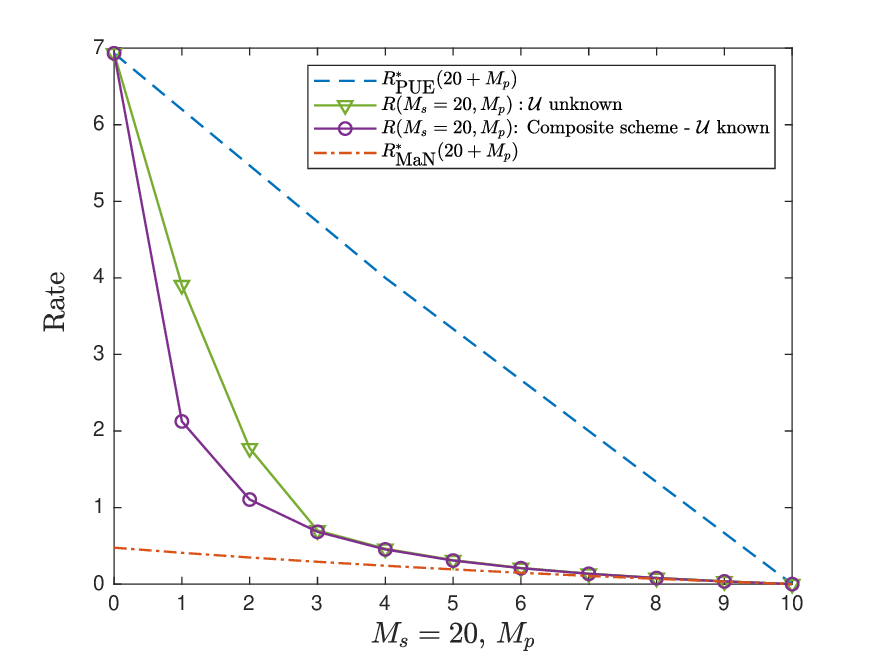}
		\caption{$M_s=20$}
		\label{fig:ts2skewed}
	\end{subfigure}
	\caption{\mbox{For a network with $K=30$, $N=30$, $\Lambda=5$, $\mathcal{U}=\{\{1,2,\ldots,20\}$,$\{21,\ldots,24\}$},$\{25,26,27\}$,$\{28,29\},\{30\}\}$. }
	\label{fig:skewed}
\end{figure*}

\begin{figure*}
	\centering
	\begin{subfigure}{0.45\textwidth}
		\includegraphics[width=\linewidth]{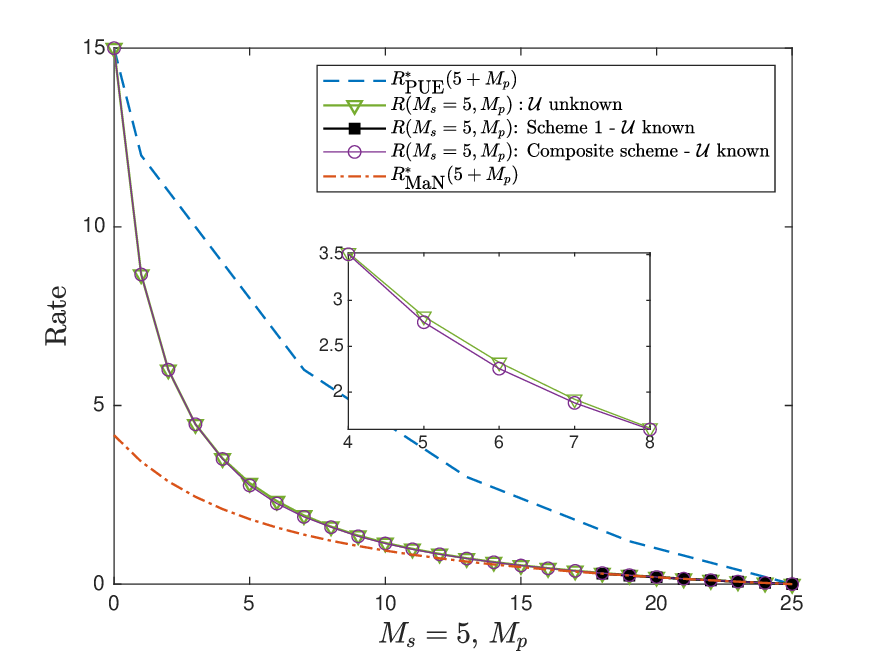}	
		\caption{$M_s=5$}
		\label{fig:ts1uniform}
	\end{subfigure}
	\hfill
	\begin{subfigure}{0.45\textwidth}
		\includegraphics[width=\linewidth]{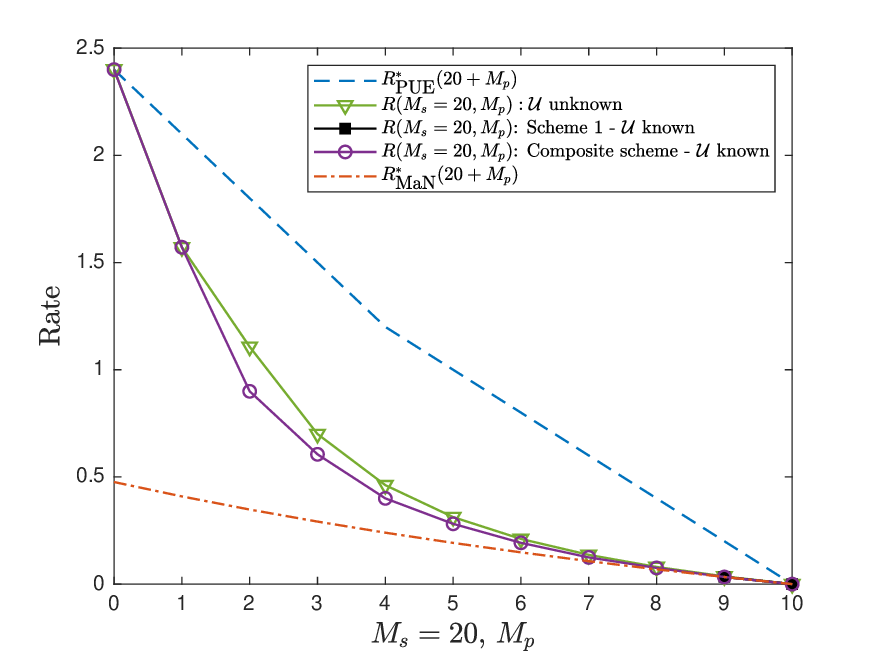}
		\caption{$M_s=20$}
		\label{fig:ts2uniform}
	\end{subfigure}
	\caption{\mbox{For a network with $K=30$, $N=30$, $\Lambda=5$,  $\mathcal{U}=\{\{1,\ldots,6\}$,$\{7,\ldots,12\}$,}$\{13,\ldots,18\}$,$\{19,\ldots,24\},$ \\  $\textrm{\hspace{1cm}}\{25,\ldots,30\}\}$. }
	\label{fig:uniform}
\end{figure*}

\begin{figure*}
	\centering
\begin{subfigure}{0.45\textwidth}
	\includegraphics[width=\linewidth]{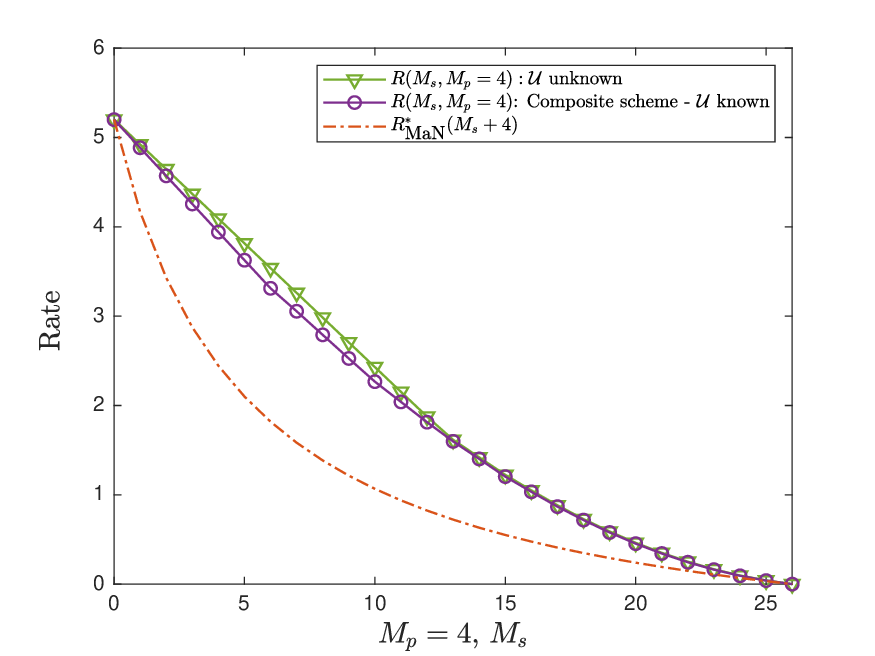}	
	\caption{$\mathcal{U}=\{\{1,2\ldots,20\},\{21,\ldots,24\},\{25.26,27\},$$\{28,29\},\{30\}\}$}
	\label{fig:Mpnonunif}
\end{subfigure}
\hfill
	\begin{subfigure}{0.45\textwidth}
	\includegraphics[width=\linewidth]{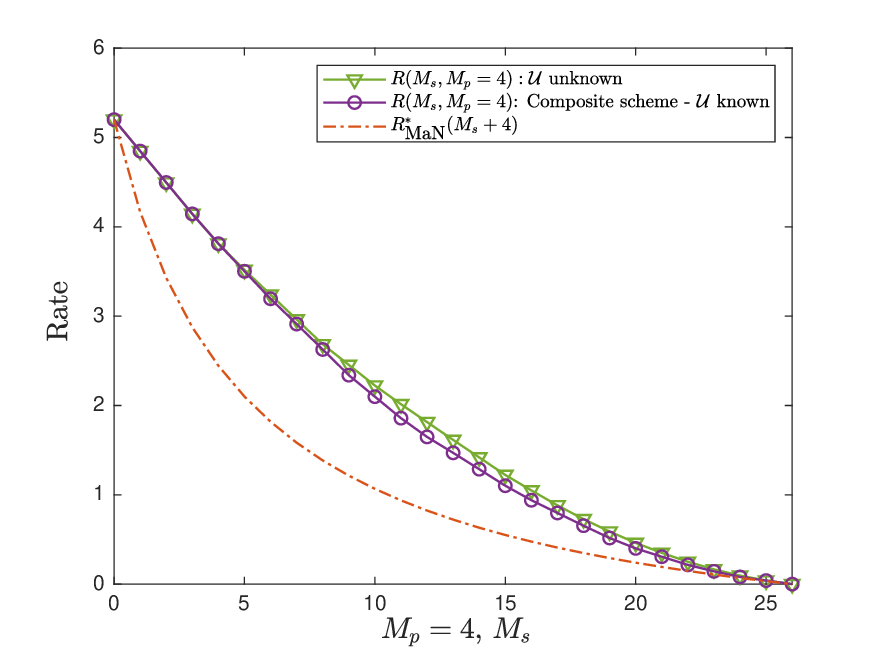}
	\caption{$\mathcal{U}=\{\{1,2\ldots,6\},\{7,\ldots,12\},\{13,\ldots,18\}$\\$ \textrm{\hspace{1.3cm}} \{19,\ldots,24\},\{25,\ldots,30\}\}$}
	\label{fig:Mpunif}
\end{subfigure}
	\caption{\mbox{For a network with $K=30$, $N=30$, $\Lambda=5$}}
\label{fig:Mpfixed}	
\end{figure*}

In Fig.~\ref{fig:ts1skewed} and Fig.~\ref{fig:ts2skewed}, all the plots are drawn by varying $M_p$ and fixing $M_s=5$ and $M_s = 20$, respectively. When $M_p = 0$, the schemes for both the $\mathcal{U}$ unknown and known cases achieve the optimal rate $R^{*}_{\textrm{PUE}}(M_s)$. For smaller values of $M_p$, there is a gain obtained by knowing $\mathcal{U}$ a priori in both the plots. However, due to the asymmetry in $\mathcal{U}$, we cannot exactly attribute that gain to any of the individual schemes (MaN scheme or Scheme $2$) in the composite scheme. At higher values of $M_p$, the improvement obtained by the composite scheme over the $\mathcal{U}$ unknown scheme is marginal. However, both  the proposed schemes' performances approach $R^{*}_{\textrm{MaN}}(M)$ in the higher memory regime of $M_p$. In Fig.~\ref{fig:ts1skewed}, Scheme 1 for $\mathcal{U}$ known case exists only when $24 \leq M_p \leq 25$. For all other values of $M_p$, the constraints $t=K(M_s+M_p)/N \geq \mathcal{L}_1$ and $\frac{\binom{K-\mathcal{L}_1}{t-\mathcal{L}_1}N}{\binom{K}{t}} \geq M_s$ are not satisfied jointly. Whereas, in the case of $M_s=20$, this condition is satisfied only at the trivial point $M_p = N-M_s=10$. Hence, it is not shown in Fig.~\ref{fig:ts2skewed}.

For $M_s=5$ and $M_s = 20$, the plots corresponding to the uniform user-to-cache association $\mathcal{U}=\{\{1,\ldots,6\},\{7,\ldots,12\}$,
$\{13,\ldots,18\},\{19,\ldots,24\},\{25,\ldots,30\}\}$ are given in Fig.~\ref{fig:ts1uniform} and Fig.~\ref{fig:ts2uniform}, respectively. When $M_s =5$, the composite scheme offers only a minuscule gain over the $\mathcal{U}$ agnostic scheme.
This is because the profile is uniform, hence the scheme in the $\mathcal{U}$ unknown case performs as good as the scheme in the known case presented in Section~\ref{subsec: profile_opt}. Whereas, in the case of $M_s = 20$, a significant gain is obtained by the composite scheme at the smaller values of $M_p$. In fact, the gain obtained from the composite scheme is attributed to the coding gain provided by Scheme $2$, which is $(t_s^{(S2)}+1)(t_p^{(S2)}+1)$. When $M_s = 5$, Scheme $1$ exists for $19 \leq M_p \leq 25$, and when $M_s = 20$, Scheme $1$ exists only for $9\leq M_p \leq 10 $.

In Fig.~\ref{fig:Mpnonunif} and Fig.~\ref{fig:Mpunif}, the rate-memory tradeoff of the proposed schemes are characterized by varying $M_s$ and fixing $M_p=4$ for a non-uniform $\mathcal{U}$ and a uniform $\mathcal{U}$, respectively. For non-uniform $\mathcal{U}$, a small gain is obtained by the composite scheme in the smaller values of $M_s$ region. Whereas, in the case of uniform profile, the improvement provided by the composite scheme is significant when $M_s$ is slightly large. This gain is explicitly due to the $(t_s^{(S2)}+1)(t_p^{(S2)}+1)$ coding gain provided by Scheme $2$ in the composite scheme.

Next, we compare the performance of our schemes  with that of MaN scheme and the optimal shared cache scheme in \cite{PUE} by keeping the system memory same in all the three network models. In the network model shown in Fig.~\ref{fig:setting}, the total system memory is $\Lambda M_s +KM_p$. The rate versus system memory plots are given in Fig.~\ref{fig:Mconstant} for a uniform and a non-uniform profile. Note that different $(M_s,M_p)$ pairs can lead to the same total system memory, but, the rates achieved by our schemes vary for each $(M_s,M_p)$ pair. The plots of our proposed schemes  in Fig.~\ref{fig:Mnonunif} and Fig.~\ref{fig:Munif} are drawn by varying $M_p$ and fixing $M_s = 20$ (same as in Fig.\ref{fig:ts2skewed} and Fig.~\ref{fig:ts2uniform}). Corresponding to each $M_p \in [0,N-20]$, the rate $R^{*}_{\textrm{PUE}}(M)$ is calculated for $M = (\Lambda M_s +KM_p)/\Lambda=M_s+\frac{K}{\Lambda}M_p$, and $R^{*}_{\textrm{MaN}}(M)$ corresponds to $M = (\Lambda M_s +K M_p)/K = M_p + \frac{\Lambda}{K}M_s$ as  the total system memory needs to be constant in all the considered network models. The rate $R^{*}_{\textrm{PUE}}(M)$ becomes zero at smaller values of $M_p$ itself, and $R^{*}_{\textrm{MaN}}(M)$ never goes to zero in Fig.~\ref{fig:Mnonunif} and Fig.~\ref{fig:Munif}, as $M_p + \frac{\Lambda}{K}M_s$ is always less than $N$ for every $0 \leq M_p \leq N - M_s$.  When $\mathcal{U}$ is uniform, $R^{*}_{\textrm{PUE}}(M/\Lambda)$  and $R^{*}_{\textrm{MaN}}(M/K)$ will always serve as a lower bound and an upper bound on the performance of our schemes (see Fig.~\ref{fig:Munif}). It is expected due to the fact that the total memory accessed by an individual user in different network models is in the order: $\frac{\Lambda}{K}M_s +  M_p \leq M_s + M_p \leq M_s + \frac{K}{\Lambda}M_p$ . When $\mathcal{U}$ is non-uniform, the rate-memory curves of the proposed schemes are not always bounded between  $R^{*}_{\textrm{PUE}}(M/\Lambda)$ and $R^{*}_{\textrm{MaN}}(M/K)$ (see Fig.~\ref{fig:Mnonunif}). This is because, in a shared cache network, the multicasting opportunities are severely affected by the skewness in $\mathcal{U}$.

\begin{figure*}
	\centering
	\begin{subfigure}{0.45\textwidth}
		\includegraphics[width=\linewidth]{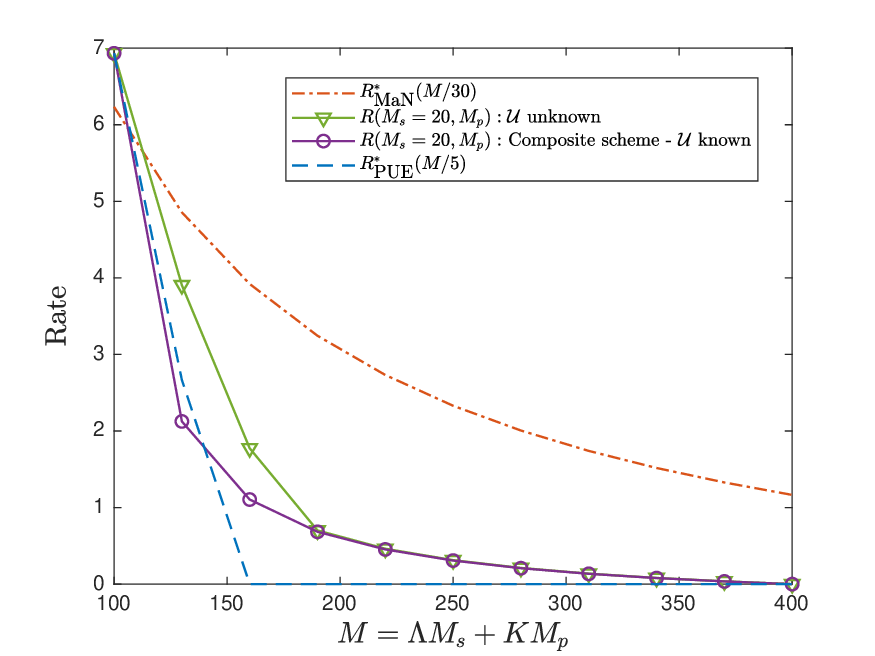}	
		\caption{$\mathcal{U}=\{\{1,2\ldots,20\},\{21,\ldots,24\},\{25.26,27\},\{28,29\},\{30\}\}$}
		\label{fig:Mnonunif}
	\end{subfigure}
	\hfill
	\begin{subfigure}{0.45\textwidth}
		\includegraphics[width=\linewidth]{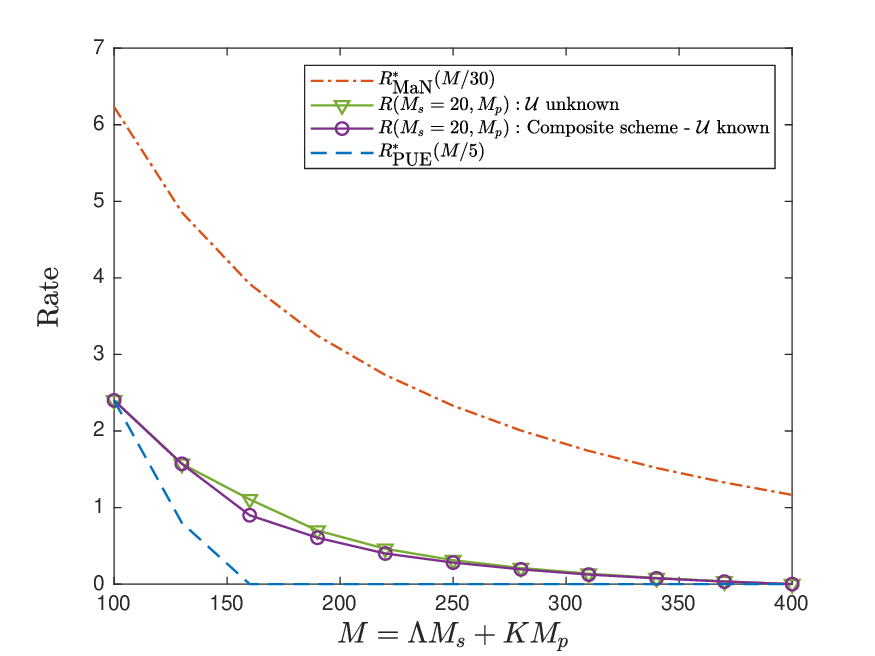}
		\caption{$\mathcal{U}=\{\{1,2\ldots,6\},\{7,\ldots,12\},\{13,\ldots,18\}$\\$ \textrm{\hspace{1.3cm}} \{19,\ldots,24\},\{25,\ldots,30\}\}$}
		\label{fig:Munif}
	\end{subfigure}
	\caption{\mbox{For a network with $K=30$, $N=30$, $\Lambda=5$}}
	\label{fig:Mconstant}	
\end{figure*}

\section{Lower Bound and Optimality of Scheme 2 in Certain Memory Regimes}
\label{sec:lb}
In this section, we derive a cut-set based lower bound on $R^{*}(M_s,M_p)$, and show that Scheme $2$ is optimal in certain memory regimes.

\begin{thm}
\label{thm4}
For $N$ files, $K$ users, each with a cache of size $M_p \geq 0$, assisted by $\Lambda$ helper caches, each of size $M_s \geq 0$, the following lower bound holds when $M_s+M_p \leq N$,
\begin{equation}
{R^{*}(M_s,M_p) \geq \underset{u \in \{1,2,\ldots,\min({N,K})\}}{\max} \left( u -\frac{uM_p+\lambda_uM_s}{\lfloor N/u \rfloor} \right)},
\label{eq:cutset2}
\end{equation}
where $\lambda_u$ represents the index of the helper cache to which the $u^{th}$ user is connected, assuming users and helper caches are ordered.
\end{thm}
\begin{IEEEproof}
Let $u \in [\min\{N,K\}]$, and consider the first $u$ users. Assuming that the caches and the users are ordered, let $\lambda_u$ denote the helper cache to which the $u^{th}$ user is connected. For a demand vector $\mathbf{d}_1=(1,2,\ldots,u,\phi,\ldots,\phi)$, where the first $u$ users request the files $W_1,W_2,\ldots,W_u$, respectively, and the remaining $K-u$ users' requests can be arbitrary, the sever makes a transmission $X_1$. The transmission $X_1$, the cache contents $\mathcal{Z}_{[1:\lambda_u]}$ and $Z_{[1:u]}$ enable the decoding of the files $W_{[1:u]}$. Similarly, for a demand vector $\mathbf{d}_2=(u+1,u+2,\ldots,2u,\phi,\ldots,\phi)$, the transmission $X_2$ and the cache contents $\mathcal{Z}_{[1:\lambda_u]}$, $Z_{[1:u]}$ help to recover the files $W_{[u+1:2u]}$. Therefore, if we consider $\lfloor N/u \rfloor$ such  demand vectors and its corresponding transmissions $X_{[1:\lfloor N/u \rfloor ]}$, the files $W_{[1:u\lfloor N/u \rfloor]}$ can be decoded by the first $u$ users using their available cache contents. Thus, we obtain
\begin{equation}
\begin{aligned}
u \lfloor N/u \rfloor  &\leq uM_p + \lambda_uM_s + \lfloor N/u \rfloor R^{*}(M_s,M_p).
\label{eq:cutset}
\end{aligned}
\end{equation}
On rearranging \eqref{eq:cutset}, we get
$R^{*}(M_s,M_p) \geq u -\frac{uM_p+\lambda_uM_s}{\lfloor N/u \rfloor}$.
Optimizing over all possible choices of $u$, we obtain
\begin{equation*}
{R^{*}(M_s,M_p) \geq \underset{u \in \{1,2,\ldots,\min({N,K})\}}{\max} \left( u -\frac{uM_p+\lambda_uM_s}{\lfloor N/u \rfloor} \right)}.
\end{equation*}
This completes the proof of Theorem~\ref{thm4}.
\end{IEEEproof}

The lower bound in \eqref{eq:cutset2} holds irrespective of the type of placement employed. Using the above lower bound, we now show that Scheme $2$ is optimal when $M_s \geq N(1-\frac{1}{\Lambda})$ and $N \geq M_s+M_p \geq N(1-\frac{1}{\Lambda \mathcal{L}_1})$.
\begin{lem}
\label{lem2}
	For $M_s \geq N(1-\frac{1}{\Lambda})$ and $N \geq M_s+M_p \geq N(1-\frac{1}{\Lambda\mathcal{L}_1})$, the optimal rate is
	\begin{equation}
	  R^{*}(M_s,M_p)= 1-\frac{M_s+M_p}{N}.
	\end{equation}
\end{lem}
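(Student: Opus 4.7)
The plan is to sandwich $R^*(M_s,M_p)$ between the cut-set lower bound and a memory-shared upper bound that both evaluate to $1-(M_s+M_p)/N$ in the stated high-memory regime. For the lower bound I would invoke Theorem~\ref{thm4} with $u=1$; because users and helper caches are ordered the first user attaches to the first helper cache, so $\lambda_1 = 1$ and $\lfloor N/1\rfloor = N$, and the theorem immediately yields $R^*(M_s,M_p)\ge 1-(M_s+M_p)/N$.

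For the upper bound I would combine two achievable operating points by memory sharing. The first is the trivial zero-rate scheme on the diagonal $M_s+M_p=N$: split each file into two disjoint chunks of normalized lengths $M_s/N$ and $M_p/N$, store the former in every helper cache and the latter in every private cache, so that each user's combined cache covers the entire library and no transmission is needed. The second is Scheme~2 of Theorem~\ref{thm2} evaluated at $t_s=\Lambda-1$ and $t_p=\mathcal{L}_1-1$, which corresponds to $(M_s^\star,M_p^\star)=\bigl(N(1-1/\Lambda),\,N(\mathcal{L}_1-1)/(\Lambda\mathcal{L}_1)\bigr)$. Since $\Lambda-t_s=1$, only the $n=1$ summand survives in \eqref{eq:rate2}, collapsing the rate to $(\mathcal{L}_1-t_p)/(\Lambda(t_p+1))=1/(\Lambda\mathcal{L}_1)$, which a short calculation shows equals $1-(M_s^\star+M_p^\star)/N$. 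For any target $(M_s,M_p)$ in the hypothesis region I would then write it as $\alpha(M_s^\star,M_p^\star)+(1-\alpha)(M_s',M_p')$ with $(M_s',M_p')$ on the diagonal; because $1-(M_s+M_p)/N$ is affine in $(M_s,M_p)$ and both endpoints attain exactly this value, memory sharing delivers the matching upper bound.

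The main obstacle is the feasibility check for the memory-sharing step. Solving the decomposition gives $\alpha=\Lambda\mathcal{L}_1\bigl(1-(M_s+M_p)/N\bigr)$, and one must verify that $\alpha\in[0,1]$ and that the auxiliary point $(M_s',M_p')$ has non-negative coordinates throughout the region. The upper bound $\alpha\le 1$ reduces to $(\Lambda-1)(\mathcal{L}_1-1)\ge 0$, which is automatic, while the positivity of $M_s'$ and $M_p'$ is precisely what forces the thresholds $M_s\ge N(1-1/\Lambda)$ and $M_p\ge N(1-1/\mathcal{L}_1)$ to appear as the sharp ones. This bookkeeping is the only place where the specific structure of the hypothesis is actually needed; otherwise the proof is mechanical once one spots that Scheme~2 degenerates to a single-term expression at $t_s=\Lambda-1$.
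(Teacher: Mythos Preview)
Your proof is correct and follows essentially the same strategy as the paper: invoke the cut-set bound of Theorem~\ref{thm4} with $u=1$ for the lower bound, and for the upper bound memory-share between the Scheme~2 corner at $(t_s,t_p)=(\Lambda-1,\mathcal{L}_1-1)$ and zero-rate points on the diagonal $M_s+M_p=N$. The only cosmetic difference is that the paper fixes two specific diagonal endpoints $\bigl(N(1-\tfrac1\Lambda),\tfrac{N}{\Lambda}\bigr)$ and $(N,0)$ (arising from Scheme~2 with $t_p=\mathcal{L}_1$ and $t_s=\Lambda$, respectively) and memory-shares over the resulting triangle, whereas you let the diagonal point float; since $1-(M_s+M_p)/N$ is affine in $(M_s,M_p)$, the two variants coincide.
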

\begin{IEEEproof}
Consider Scheme $2$ discussed in Section~\ref{subsec:scheme2}. To satisfy the condition $M_s \geq N(1-\frac{1}{\Lambda})$, $t_s$ needs to be either $\Lambda-1$ or $\Lambda$. When $t_s=\Lambda-1$, $M_s=N(1-\frac{1}{\Lambda})$, and $t_s=\Lambda$ corresponds to $M_s=N$. 

Consider the case $t_s=\Lambda-1$. In order to hold the condition $M_s+M_p \geq N(1-\frac{1}{\Lambda \mathcal{L}_1})$, $M_p$ needs to be greater than $\frac{N}{\Lambda}(1-\frac{1}{\mathcal{L}_1})$. The condition $M_p \geq \frac{N}{\Lambda}(1-\frac{1}{\mathcal{L}_1})$ implies $t_p=\mathcal{L}_1-1$ or $t_p=\mathcal{L}_1$. Then, corresponding to the $(t_s,t_p)$ pairs $(\Lambda-1,\mathcal{L}_1-1)$ and $(\Lambda-1,\mathcal{L}_1)$, using Scheme $2$, the memory-rate triplets $(M_s,M_p,R)$: $\big(N(1-\frac{1}{\Lambda}),\frac{N}{\Lambda}(1-\frac{1}{\mathcal{L}_1}), \frac{1}{\Lambda\mathcal{L}_1}\big)$ and $\big(N\big(1-\frac{1}{\Lambda}\big),\frac{N}{\Lambda},0\big)$ are achievable, respectively. When $t_s=\Lambda$, we get $M_s=N$ and $M_p=0$, which means all the files are entirely cached in each helper cache, and the rate is $R(N,0)=0$. Thus, the memory-rate triplet $(N,0,0)$ is also achievable.
By memory-sharing among the triplets $\big(N(1-\frac{1}{\Lambda}),\frac{N}{\Lambda}(1-\frac{1}{\mathcal{L}_1}), \frac{1}{\Lambda\mathcal{L}_1}\big)$, $\big(N\big(1-\frac{1}{\Lambda}\big),\frac{N}{\Lambda},0\big)$, and $(N,0,0)$, the rate 
$ R(M_s,M_p)=1-\frac{M_s+M_p}{N}$
is achievable by Scheme $2$ when $M_s \geq N(1-\frac{1}{\Lambda})$ and $N \geq M_s+M_p \geq N(1-\frac{1}{\Lambda \mathcal{L}_1})$.

Next, consider the cut-set bound obtained in \eqref{eq:cutset2}. Substitute $u=1$, then we get:
$R^{*}(M_s,M_p)\geq 1-\frac{M_s+M_p}{N}.$ 
Thus, we obtain
$R^{*}(M_s,M_p)=1-\frac{M_s+M_p}{N}
$
for $M_s \geq N(1-\frac{1}{\Lambda})$ and $N \geq M_s+M_p \geq N(1-\frac{1}{\Lambda \mathcal{L}_1})$.
This completes the proof of Lemma~\ref{lem2}.
\end{IEEEproof}

\section{Conclusion}
\label{sec:concl}

In this work, we studied a network model where the users have access to two types of caches: a private cache and a cache shared with some of the users. For this network model, we proposed centralized coded caching schemes under uncoded placement for two different scenarios: with and without the prior knowledge of user-to-helper cache association. With user-to-helper cache association known at the server during the placement, we designed four schemes: one inspired from the optimal dedicated cache scheme and has limited operational regime, the two other proposed schemes exhibited improved performances in different memory regions, hence both are combined to form a fourth scheme that always result in a better performance than the scheme in the unknown case. The optimality of some of the schemes in the known case are also shown, either by matching the rate with the known optimality results or by matching it with the derived cut-set bound. Deriving stronger converse bounds and characterizing the achievability results in all memory regimes is a direction to work on further.

 
\section*{Acknowledgment}

This work was supported partly by the Science and Engineering Research Board (SERB) of Department of Science and Technology (DST), Government of India, through J.C Bose National Fellowship to Prof. B. Sundar Rajan, and by the Ministry of Human Resource Development (MHRD), Government of India through Prime Minister's Research Fellowship to Elizabath Peter and K. K. Krishnan Namboodiri.

\begin{appendices}
	\section{}
	\label{sec:appendix1}
 In this section, we find the value of $f$ that minimizes the function $\frac{f(K-\frac{KM_s}{fN})}{1+\frac{\Lambda M_s}{fN}}+\frac{(1-f)(K-\frac{KM_p}{(1-f)N})}{1+\frac{K M_p}{(1-f)N}}$, where $M_s/N\leq f\leq 1-M_p/N$. We have
\begin{equation*}
f^\ast = \arg\min_{f} \frac{f(K-\frac{KM_s}{fN})}{1+\frac{\Lambda M_s}{fN}}+\frac{(1-f)(K-\frac{KM_p}{(1-f)N})}{1+\frac{K M_p}{(1-f)N}}.
\end{equation*} 
To find $f^\ast$, we equate the first derivative of the function to zero, and we obtain the following:
\begin{align}
& \diff{}{f}\left(\frac{f(1-\frac{M_s}{fN})}{1+\frac{\Lambda M_s}{fN}}+\frac{(1-f)(1-\frac{M_p}{(1-f)N})}{1+\frac{K M_p}{(1-f)N}}\right) =0,\notag\\
\implies & \diff{}{f}\left(\frac{f-\frac{M_s}{N}}{1+\frac{\Lambda M_s}{fN}}\right) =-\diff{}{f}\left(\frac{1-f-\frac{M_p}{N})}{1+\frac{K M_p}{(1-f)N}}\right). 
\label{eq:der_zero}
\end{align}
For ease of representation, we denote $m_p =M_p/N$ and $m_s = M_s/N$. Then, \eqref{eq:der_zero} becomes
\begin{align}
\frac{1+\frac{\Lambda m_s}{f}+(f-m_s)\frac{\Lambda m_s}{f^2}}{(1+\frac{\Lambda m_s}{f})^2} = \frac{1+\frac{K m_p}{1-f}+(1-f-m_p)\frac{K m_p}{(1-f)^2}}{(1+\frac{K m_p}{1-f})^2}.
\label{eq:der_zero1}
\end{align}
Rearranging the terms and completing squares on both sides in \eqref{eq:der_zero1} give
\begin{align*}
\frac{(1+\frac{\Lambda m_s}{f})^2-\frac{\Lambda(\Lambda+1) m_s^2}{f^2}}{(1+\frac{\Lambda m_s}{f})^2} = \frac{(1+\frac{K m_p}{1-f})^2-\frac{K(K+1) m_p^2}{(1-f)^2}}{	(1+\frac{K m_p}{1-f})^2}.
\end{align*}
Therefore, we have
\begin{align*}
1-\frac{\Lambda(\Lambda+1) m_s^2}{(f+\Lambda m_s)^2} =1-\frac{K(K+1) m_p^2}{(1-f+Km_p)^2}.
\end{align*}
By further rearranging, we get
\begin{align*}
\frac{ (f+\Lambda m_s)^2}{(1-f+Km_p)^2} = \frac{\Lambda(\Lambda+1) m_s^2}{K(K+1) m_p^2}.
 \end{align*}
Upon further reduction, we obtain
 \begin{small}
 \begin{align*}
& (f+\Lambda m_s)\sqrt{K(K+1)} m_p    =  (1-f+Km_p)\sqrt{\Lambda(\Lambda+1)} m_s, \\
&\implies f(\sqrt{K(K+1)} m_p+ \sqrt{\Lambda(\Lambda+1)} m_s)   = \\ &\hspace{1.8cm}(1+Km_p)\sqrt{\Lambda(\Lambda+1)} m_s -\Lambda m_s\sqrt{K(K+1)} m_p.
\end{align*}
\end{small}
Thus, we obtain
\begin{align}
f = \frac{M_s}{N}\frac{(1+\frac{KM_p}{N})\sqrt{\Lambda(\Lambda+1)} - \frac{\Lambda M_p}{N}\sqrt{K(K+1)}}{\sqrt{K(K+1)}\frac{M_p}{N}+ \sqrt{\Lambda(\Lambda+1)}\frac{M_s}{N}}.
\label{eq:fexp}
\end{align}
Since  $f\geq M_s/N$, \eqref{eq:fexp} can be rewritten as
\begin{small}
\begin{equation}
\label{fopt}
f^\ast = \frac{M_s}{N}\max\left(1,\frac{(1+\frac{KM_p}{N})\sqrt{\Lambda(\Lambda+1)} - \frac{\Lambda M_p}{N}\sqrt{K(K+1)}}{\sqrt{K(K+1)}\frac{M_p}{N}+ \sqrt{\Lambda(\Lambda+1)}\frac{M_s}{N}}\right).
\end{equation}
\end{small}

\end{appendices}

\end{document}